\DeclareMathOperator{\width}{width}
\DeclareMathOperator{\argmin}{argmin}
\begin{document}
\title{Indexing Finite-State Automata Using Forward-Stable Partitions}
%
%
\author{Ruben Becker\orcidID{0000-0002-3495-3753} \and
Sung-Hwan Kim\orcidID{0000-0002-1117-5020} \and
Nicola Prezza\orcidID{0000-0003-3553-4953}  
\and Carlo Tosoni\orcidID{0009-0009-5149-2416} 
}

\authorrunning{Becker et al.}
%

\institute{DAIS, Ca' Foscari University of Venice, Italy}
%
\maketitle 
%

\newcommand{\version}{arXiv}


\ifthenelse{\equal{\version}{SPIRE}}{

\begin{abstract}
An index on a finite-state automaton is a data structure able to locate specific patterns on the automaton's paths and consequently on the regular language accepted by the automaton itself. Cotumaccio and Prezza [SODA '21], introduced a data structure able to solve pattern matching queries on automata, generalizing the famous FM-index for strings of Ferragina and Manzini [FOCS '00]. The efficiency of their index depends on the width of a particular partial order of the automaton’s states, the smaller the width of the partial order, the faster is the index. However, computing the partial order of minimal width is NP-hard. This problem was mitigated by Cotumaccio [DCC '22], who relaxed the conditions on the partial order, allowing it to be a partial preorder. This relaxation yields the existence of a unique partial preorder of minimal width that can be computed in polynomial time. In the paper at hand, we present a new class of partial preorders and show that they have the following useful properties: (i) they can be computed in polynomial time, (ii) their width is never larger than the width of Cotumaccio's preorders, and (iii) there exist infinite classes of automata on which the width of Cotumaccio's pre-order is linearly larger than the width of our preorder.




\end{abstract}

\keywords{Nondeterministic Finite Automata \and Graph Indexing \and Forward-Stable Partitions \and FM-index.}


\section{Introduction}


The Burrows-Wheeler transform (BWT) is a famous reversible string transformation~\cite{burrows1994block}. While the BWT was initially conceived as a compression tool, it has subsequently been used to implement the FM index~\cite{manziniFM} that is able to locate patterns in almost optimal time, while efficiently compressing the strings at the same time. This indexing strategy has been extended to some finite automata (or directed edge-labeled graphs), the so-called \textit{Wheeler graphs}, by Gagie et al.~\cite{gagie2017wheeler}. Wheeler graphs are a particular class of automata that can be succinctly encoded by a representation that allows pattern matching queries on the strings labeling directed paths in the automaton in almost optimal time. This indexing property relies on the fact that the states of the automaton can be totally ordered in a way that is consistent with the co-lexicographic order of the strings accepted by the automaton's states. Such an order of the states is called a \textit{Wheeler order}. As not all automata admit a Wheeler order, this strategy may fail to apply and hence not all finite automata are Wheeler. Cotumaccio and Prezza~\cite{cotumaccio2021indexing} extended the notion to arbitrary finite automata by not restricting the state order to be total, instead allowing arbitrary partial orders, called \textit{co-lex orders}. As any finite automaton admits such a co-lex order, this enables us to build an efficient index for the language accepted by any automaton in a very similar fashion as one does with the original FM index for a given string. The co-lex order however is not in general unique and, furthermore, the choice of the co-lex order is of crucial importance for the efficiency of the index. More precisely, the efficiency is directly characterized by the \textit{width} of the particular co-lex order: the lower the width of the co-lex order is, the faster pattern matching queries are and the smaller the index is. Unfortunately, computing the co-lex order of minimal width is NP-hard~\cite{cotumaccio2021indexing}. Cotumaccio~\cite{cotumaccio2022graphs} introduced \textit{co-lex relations} by relaxing the requirements on co-lex orders, allowing them to be reflexive relations, i.e., unrequiring antisymmetry and transitivity. This relaxation yields that a unique maximum co-lex relation always exists. Moreover, this maximum is a preorder (i.e., transitivity is recovered through maximality). Maybe most importantly though, the maximum co-lex relation can be computed in polynomial time (more precisely in quadratic time in the number of transitions). The width of the maximum co-lex relation is at most the width of every co-lex order and may be asymptotically smaller. Hence, the index built based on the maximum co-lex relation is never slower and can be asymptotically 
faster than those built based on co-lex orders~\cite{cotumaccio2021indexing}.

\paragraph{Our contribution.}
In this paper, we present a new category of preorders that we call \textit{coarsest forward-stable co-lex (CFS) orders}. CFS orders are equally useful as maximum co-lex relations for constructing indices on automata. Similar to co-lex orders being the generalization of Wheeler orders to width larger than one, CFS orders are the generalization of Wheeler preorders (introduced by Becker et al.~\cite{becker2023sorting}) to width larger than one. As shown by Cotumaccio for maximum co-lex relations, we can show that our CFS orders always exist and are unique. Furthermore, for a given automaton, the unique CFS order can be computed in quadratic time in the number of transitions as well. Most importantly however, we prove that the width of the 
CFS order is never larger than the width of the maximum co-lex relation and, 
in some cases, asymptotically smaller. 

In Figure~\ref{fg:ex:3}, we show the relationships among the previously  mentioned concepts for building indices on automata. All previously known orders shown in Figure~\ref{fg:ex:3} are formally introduced in Section~\ref{sec:not}, where we describe the preliminaries for this work. In Section~\ref{sec:CSF} we introduce coarsest forward-stable co-lex (CFS) orders and in Section~\ref{sec:CSF CR} we prove the claimed properties of these preorders. Some proofs in this article are deferred to an appendix due to space limitations.

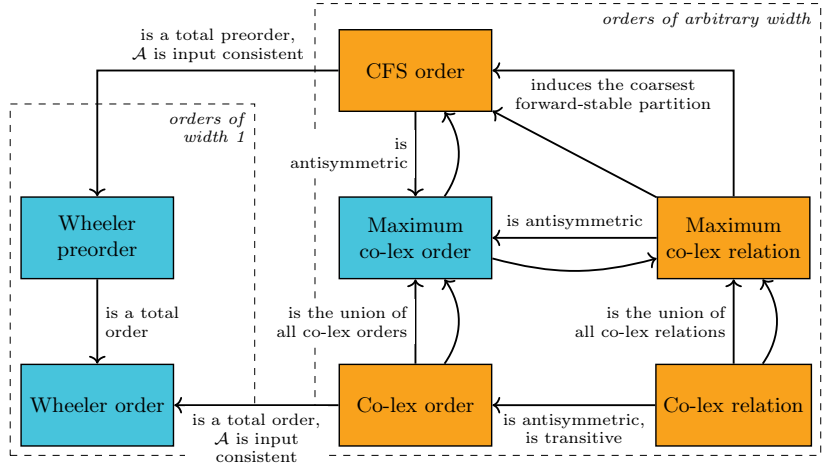
\begin{figure}[ht]
\begin{center}
\resizebox{0.9\textwidth}{!}{
\begin{tikzpicture}
[rel/.style={shape = rectangle, align = center, minimum height = 3.75em, minimum width = 7em, font=\footnotesize, draw=black, semithick},
col1/.style={fill=SkyBlue},
col2/.style={fill=YellowOrange},
edges/.style={font=\scriptsize, thick},
widthcol/.style={draw=BlueViolet},
cat/.style={dashed, thin},
catnod/.style={font=\scriptsize}]

    \node[rel, col2] (1) at (9.65,0) {Co-lex relation};
    \node[rel, col2] (2) at (4.9,0) {Co-lex order};
    \node[rel, col2] (3) at (9.65,2.5) {Maximum\\co-lex relation};
    \node[rel, col1] (4) at (4.9,2.5) {Maximum\\co-lex order};
    \node[rel, col2] (5) at (4.9,5) {CFS order};
    \node[rel, col1] (6) at (0.15,2.5) {Wheeler\\preorder};
    \node[rel, col1] (7) at (0.15,0) {Wheeler order};

    \begin{scope}[on background layer]
        \draw[cat] (-1.15, 4.5) rectangle (2.5,-0.75);
        \node[catnod, align = right] at (1.8, 4.15) {\textit{orders of}\\\textit{width 1}};
        \draw[cat] (3.4,-0.75) rectangle (10.95, 6);
        \node[catnod] at (9.2,5.75) {\textit{
        orders of arbitrary width
        }};
    \end{scope}
    
    \draw[->, edges] (1) edge node [below, align = center] {
    is antisymmetric,\\is transitive
    } (2);
    \draw[->, edges] (1) edge node [left, align = right] {
    is the union of\\all co-lex relations
    } (3);
    \draw[->, edges] (2) edge node [left, align = right, fill = white] {
    is the union of\\all co-lex orders
    } (4);
    \draw[->, edges] (3) edge node [above] {
    is antisymmetric
    } (4);
    \path[->, edges] (5) edge node [left, align = right, fill = white] {
    is\\antisymmetric
    } (4);
    
    \draw[->, edges] (3) --  (9.65,5) -- node [below, align = center] {
    induces the coarsest\\forward-stable partition
    } (5);

    \draw[->, edges] (5) -- node [above, align = center]  {
    is a total preorder,\\$\mathcal{A}$ is input consistent
    } (0.15,5) -- (6);
    
    \draw[->, edges] (6) edge node [right, align = left] {
    is a total\\order
    } (7);

    \draw[->, edges] (2) edge node [below, align = center, fill=white]{
    is a total order,\\$\mathcal{A}$ is input\\consistent
    } (7);

    \draw[->, edges, widthcol, bend right = 35] (1) edge (3);

    \draw[->, edges, widthcol, bend right = 35] (2) edge (4);

    \draw[->, edges, widthcol, bend right = 35] (4) edge (5);

    \draw[->, edges, widthcol, bend right = 15] (4) edge (3);

    \draw[->, edges, widthcol] (3) edge (5);
    
\end{tikzpicture}
}
\end{center}
\caption{An NFA $\mathcal{A}$ is \textit{input consistent} if for each state $u$ in $\mathcal{A}$, all incoming edges of $u$ are labeled with the same character. 
We show the connections between the different relations described. 
A relation is orange, if every automaton always admits an instance of that relation, and it is blue otherwise. Orders on the left, i.e., \textit{Wheeler preorders} and \textit{Wheeler orders}, are of width 1, while the others may be of arbitrary width. 
A blue edge $A \rightarrow B$ means that any relation of type $A$ has always a width larger than or equal to a relation of type $B$. A black edge $A \xrightarrow{c} B$ means a relation of type $A$ is also a relation of type $B$ if it satisfies the requirements $c$. In this case, a relation of type $B$ is always also a relation of type $A$ with the following exceptions: (i) The \textit{coarsest forward-stable co-lex order} may not be equal to the \textit{maximum co-lex relation}. (ii) If the \textit{maximum co-lex order} exists, then it may not be equal to the \textit{coarsest forward-stable co-lex order}. (iii) A \textit{Wheeler order} may not be equal to the \textit{Wheeler preorder}. All implications either directly follow from their definitions or are proved in Appendix~\ref{appendix: relationships}.}
\label{fg:ex:3}
\end{figure}

\section{Preliminaries}\label{sec:not}

\paragraph{Nondeterministic finite automata.}
A nondeterministic finite automaton (NFA) is a 4-tuple $(Q,\delta,\Sigma,s)$, where $Q$ represents the set of the states, $\delta : Q \times \Sigma \rightarrow 2^{Q}$ is the automaton's transition function, $\Sigma$ is the alphabet and $s \in Q$ is the initial state. The standard definition of NFAs includes also a set of final states that we omit since we are not concerned in distinguishing between final and non-final states. We assume the alphabet $\Sigma$ to be effective, i.e., each character of $\Sigma$ labels at least one edge of the transition function. A deterministic finite automaton (DFA) is an NFA such that each state has at most one outgoing edge labeled with a given character. Given an NFA $\mathcal{A} = (Q, \delta, \Sigma, s)$, for a state $u \in Q$, and a character $a \in \Sigma$, we use the shortcut $\delta_{a}(u)$ for $\delta(u,a)$. For a set $S \subseteq Q$, we define $\delta_{a}(S) \coloneqq \bigcup_{u \in S}\delta_{a}(u)$. The set of finite strings over $\Sigma$, denoted by $\Sigma^{*}$, is the set of finite sequences of letters from $\Sigma$. We extend the transition function $\delta$ to the elements of $\Sigma^{*}$ in the following way: for $\alpha \in \Sigma^{*}$ and $u \in Q$ we define $\delta(u,\alpha)$ recursively as follows. If $\alpha = \varepsilon$ (i.e. $\alpha$ is the empty string) then $\delta(u,\alpha)  = \{u\}$. Otherwise, if $\alpha = \alpha'a$, with $\alpha' \in \Sigma^{*}$, $a \in \Sigma$, then $\delta(u,\alpha) = \bigcup_{v\in \delta(u,\alpha')}\delta(v,a)$. 


\begin{definition}[Strings reaching a state]\label{2:df:reac_str}
    Given an NFA $\mathcal{A}=(Q,\delta,\Sigma,s)$, the set of strings reaching a state $u \in Q$ is defined as $I_{u} = \{ \alpha \in \Sigma^{*} : u \in \delta(s,\alpha)\}$.
\end{definition}

We say that $I_{u}$ is the regular language recognized by state $u$, while the regular language recognized by $\mathcal{A}$ is the union of all the regular languages recognized by all $\mathcal{A}$'s states. Regarding the NFAs we treat, we make the following assumptions: (i) We assume that every state is reachable from the initial state. (ii) We assume that the initial state has no incoming edges. (iii) We do not require each state to have an outgoing edge for all possible labels. None of these assumptions are restrictive, since any NFA can be modified to satisfy these assumptions without changing its accepted language. Given an NFA $\mathcal{A} = $($Q$, $\delta$, $\Sigma$, $s$), and a state $u \in Q$, $u \neq s$, we denote with $\lambda(u)$ the set of the characters of $\Sigma$ that label the incoming edges of $u$. If $u = s$, we define $\lambda(u) = \{\#\}$, where $\# \notin \Sigma$.

\paragraph{Forward-stable partitions.}

Given a set $U$, a partition $\mathcal{U} = \{U_{i}\}_{i=1}^{k}$ of $U$ is a set of pairwise disjoint non-empty sets $\{U_{1}, ..., U_{k}\}$ whose union is $U$. We call the sets $U_{1}, ... ,U_{k}$ the parts of $\mathcal{U}$. Given two partitions $\mathcal{U}$ and $\mathcal{U}'$, we say that $\mathcal{U}'$ is a refinement of $\mathcal{U}$ if every part of $\mathcal{U}'$ is contained in a part of $\mathcal{U}$. Note that every partition is a refinement of itself. We use the concept of forward-stable partitions, see also the work of Alanko et al. \cite[Section 4.2]{alanko2021wheeler}.

\begin{definition}[Forward-Stability]
Given an NFA $\mathcal{A} = (Q,\delta,\Sigma,s)$ and two sets of states $S$, $T \subseteq Q$, we say that $S$ is \textit{forward-stable} with respect to $T$, if, for all $a \in \Sigma$, $S \subseteq \delta_{a}(T)$ or $S \cap \delta_{a}(T) = \emptyset$ holds. A partition $\mathcal{Q}$ of $\mathcal{A}$'s states is \textit{forward-stable} for $\mathcal{A}$, if, for any two parts $S$, $T \in \mathcal{Q}$, it holds that $S$ is forward-stable with respect to $T$.
\end{definition}

Given an NFA $\mathcal{A} = (Q,\delta,\Sigma,s)$ we say that $\mathcal{Q}$ is the \textit{coarsest} forward-stable partition for $\mathcal{A}$, if for every forward-stable partition $\mathcal{Q}'$ for $\mathcal{A}$, it holds that $\mathcal{Q}'$ is a refinement of $\mathcal{Q}$. It is easy to demonstrate that for each automaton $\mathcal{A}$ there exists a unique coarsest forward-stable partition (a proof can be found in the work of Becker et al. \cite[Appendix A]{becker2023sorting}). An example of the coarsest forward-stable partition for an NFA can be found in Figure \ref{fg:ex:2}. An interesting property about forward-stable partitions is the following: let $\mathcal{Q}$ be a forward-stable partition for an NFA $\mathcal{A}=(Q,\delta,\Sigma,s)$. If $u,v \in Q$ belong to the same part of $\mathcal{Q}$, then $u$ and $v$ are reached by the same set of strings; a proof of this property can be found in \cite[Lemma 4.7]{alanko2021wheeler}. However, the reverse does not necessarily hold: we may have states in different parts of a forward-stable partition $\mathcal{Q}$ that are reached by the same set of strings (even if $\mathcal{Q}$ is the coarsest forward-stable partition of $\mathcal{A}$). An example of this fact can be found in Figure 1 of the article by Becker et al. \cite{becker2023sorting}.  



\paragraph{Relations.}

Given a set $U$, a \textit{relation} $R \subseteq U \times U$ over $U$ is a set of ordered pairs of elements from $U$. For two elements $u$, $v$ from $U$, we write $uRv$ to denote that $(u,v) \in R$. A \textit{partial order} over a set $U$ is a relation that satisfies reflexivity, antisymmetry, and transitivity. If a partial order satisfies also connectedness (a relation over $U$ satisfies connectedness, if for each distinct $u, v \in U$, either $(u,v) \in R$ or $(v,u) \in R$ holds) then it is a \textit{total order}. A \textit{partial  preorder} over a set $U$ is a relation that satisfies reflexivity and transitivity, but not necessarily antisymmetry. A \textit{total preorder} is a partial preorder over a set $U$ that satisfies also connectedness. An \textit{equivalence relation} over a set $U$ is a relation that satisfies reflexivity, symmetry and transitivity. In this paper we use the symbol $\leq$ to denote both partial orders and partial preorders and the symbol $\sim$ for equivalence relations. Given an equivalence relation $\sim$ over a set $U$, we denote with $[u]_{\sim}$ the equivalence class of the element $u \in U$ with respect to $\sim$, i.e., $[u]_{\sim} \coloneqq \{v \in U : u \sim v\}$. We denote with $U/_{\sim}$ the partition of $U$ consisting of all equivalence classes $[u]_{\sim}$, for $u \in U$. We may not specify the set $U$ over which the relation is defined, if it is clear from the context. Given a partial order or a partial preorder $\leq$ over a set $U$, a set $L \subseteq U$ is an \textit{antichain} according to $\leq$ if for every $u, v \in L$, with $u \neq v$, pairs $(u,v)$ and $(v,u)$ do not belong to $\leq$. The \textit{width} of $\leq$, denoted by $\width(\leq)$, is the size of the largest antichain for $\leq$. Note that the width is equal to 1 if and only if $\leq$ is also a total order. A partial preorder $\leq$ over a set $U$ induces an equivalence relation $\sim$ over $U$ in the following way; For $u, v \in U$, we define $u \sim v$ if and only if $u \leq v$ and $v \leq u$. Moreover, a partial preorder $\leq$ over a set $U$ induces a partial order $\leq'$ over the set $U/_{\sim}$ (where $\sim$ is the equivalence induced by $\leq$) defined as $[u]_{\sim} \leq' [v]_{\sim}$ if and only if $u \leq v$. In addition, if $\leq$ is also a total preorder, then $\leq'$ is also a total order, since it is easy to observe that if $\leq$ satisfies connectedness, then also $\leq'$ satisfies connectedness. Given a partial order or a partial preorder $\leq$ over a set $U$, we define the symbol $<$ in the following way: for each $u, v \in U$, $u < v$ holds if and only if $u \leq v$ and $\neg(v \leq u)$. Finally, given an NFA $\mathcal{A} = (Q,\delta,\Sigma,s)$ and an equivalence relation $\sim$ over $Q$, we define the quotient automaton $\mathcal{A}/_{\sim}$ as follows:

\begin{definition}[Quotient automaton]\label{2:def:qa}
    Let $\mathcal{A} = (Q,\delta,\Sigma,s)$ be an NFA and $\sim$ an equivalence relation over $Q$. The \textit{quotient automaton} $\mathcal{A}/_{\sim}=(Q/_{\sim},\delta/_{\sim},\Sigma,s/_{\sim})$ of $\mathcal{A}$ is the NFA defined 
    by letting 
    $Q/_{\sim} := \{[u]_{\sim} : u \in Q\}$, 
    $\delta/_{\sim}([u]_{\sim},a) := \{[v]_{\sim} : (\exists v' \in [v]_{\sim}) (\exists u' \in [u]_{\sim}) (v' \in \delta(u', a))\}$, and 
    $s/_{\sim} := [s]_{\sim}$.
\end{definition}

An example of a quotient automaton of an NFA can be found in Figure \ref{fg:ex:2}.

\begin{figure}[ht]
\begin{center}
\resizebox{0.9\textwidth}{!}{
\begin{tikzpicture}
[dim/.style={minimum size=2.6em}, dots/.style={text centered}, 
scale=1, dim2/.style={minimum size=2.8, scale=0.85}, dots/.style={text centered}]
    \node[initial,state,initial where=below,dim] (0) at (0,-4) {$u_{0}$};
    \node[state,dim] (1) at (-2.3,-3.5) {$u_{1}$};
    \node[state,dim] (2) at (-1,-2.6) {$u_{2}$};
    \node[state,dim] (3) at (1,-2.6) {$u_{3}$};
    \node[state,dim] (4) at (2.3,-3.5) {$u_{4}$};
    \node[state,dim] (5) at (-2.2,-1.5) {$u_{5}$};
    \node[state,dim] (6) at (2.2,-1.5) {$u_{6}$};


    \path [->] (0) edge [bend left = 30] node [below] {$a$} (1);
    \path [->] (0) edge [bend left = 30] node [left] {$a$} (2);
    \path [->] (1) edge [loop left] node [left] {$a$} (1);
    \path [->] (2) edge [loop left] node [left] {$a$} (2);
    \path [->] (0) edge [bend right = 30] node [right] {$a$} (3);
    \path [->] (0) edge [bend right = 30] node [below] {$a$} (4);
    \path [->] (2) edge [bend left = 20] node [above] {$b$} (5);
    \path [->] (2) edge [bend left = 10] node [above] {$b$} (6);
    \path [->] (4) edge [bend right = 10] node [right] {$b$} (6);
    \path [->] (3) edge [bend right = 10] node [above] {$b$} (5);
    \path [->] (5) edge [bend left = 38] node [above] {$b$} (6);
    \path [->] (6) edge [bend right = 19] node [below] {$b$} (5);

    \begin{scope}[xshift=6.3cm]
    \node[initial,state,initial where=below,dim2] (0) at (0,-4) {$[u_{0}]_{\sim}$};
    \node[state,dim2] (1) at (-1.7,-2.5) {$[u_{1}]_{\sim}$};
    \node[state,dim2] (2) at (1.7,-2.5) {$[u_{3}]_{\sim}$};
    \node[state,dim2] (3) at (0,-1.0) {$[u_{5}]_{\sim}$};


    \path [->] (0) edge [bend left = 20] node [below] {$a$} (1);
    \path [->] (0) edge [bend right = 20] node [below] {$a$} (2);
    \path [->] (1) edge [loop left] node [left] {$a$} (1);
    \path [->] (1) edge [bend left = 20] node [above] {$b$} (3);
    \path [->] (2) edge [bend right = 20] node [above] {$b$} (3);
    \path [->] (3) edge [loop below] node [below] {$b$} (3);
    \end{scope}
\end{tikzpicture}
}
\end{center}
\caption{An NFA $\mathcal{A}=(Q,\delta,\Sigma,s)$ on the left. On the right, the corresponding quotient automaton $\mathcal{A}/_{\sim_{FS}}=(Q/_{\sim_{FS}},\delta/_{\sim_{FS}},\Sigma,s_{\sim_{FS}})$ of $\mathcal{A}$ for the coarsest forward-stable partition $Q/_{\sim_{FS}} = \{ \{u_{0}\}, \{u_{1}, u_{2}\}, \{u_{3}, u_{4}\}, \{u_{5}, u_{6}\} \}$.}
\label{fg:ex:2}



%
%

\label{4:fg:ex2}
\end{figure}
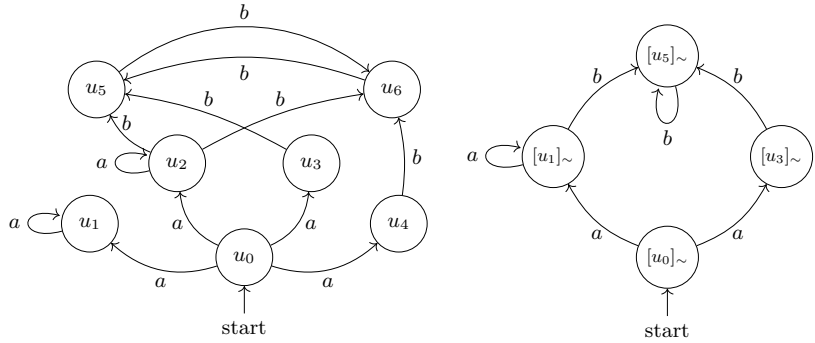

\paragraph{Wheeler and Quasi-Wheeler NFAs.}

The notion of Wheeler NFAs was first introduced by Gagie et al. \cite[Definition 1]{gagie2017wheeler}. Wheeler NFAs are a particular class of NFAs that can be endowed with a particular total order of their states that allows them to be efficiently compressed and indexed. From now on, we assume that given an NFA $\mathcal{A}=(Q,\delta,\Sigma,s)$ there exists a total order $\leq$ among the elements of the set $\Sigma \cup \{\#\}$, where $\# < a$ for each $a \in \Sigma$. In this paper we deliberately overload the notation regarding the symbol $\leq$, i.e., we will use the same symbol for orders on integers, the alphabet $\Sigma \cup \{ \# \}$ and states $Q$ of automata. It will still always be clear from the context which order we refer to.

\begin{definition}[Wheeler NFAs]
Let $\mathcal{A} = (Q,\delta,\Sigma,s)$ be an NFA. A Wheeler order $\leq$ of $\mathcal{A}$ is a total order over $Q$ such that state $s$ precedes all states in $Q \setminus \{s\}$, and, for any pair $u \in \delta_{a}(u')$ and $v \in \delta_{a'}(v')$:
\begin{enumerate}
    \item If $a < a'$, then $u < v$.
    \item If $a = a'$ and $u' < v'$, then $u \leq v$. 
\end{enumerate}
We say that $\mathcal{A}$ is a \textit{Wheeler NFA}, if there exists a \textit{Wheeler order} $\leq$ of $\mathcal{A}$

\end{definition}

In other words, A Wheeler order is a total order over $Q$ that sorts the states of $Q$ according to the set strings that reach them (see Definition \ref{2:df:reac_str}). The possibly largest drawback of Wheeler NFAs is that, recognizing whether a given NFA is Wheeler or not is an NP-complete problem~\cite{gibney2019hardness}. For this reason, Becker et al.~\cite[Definition 8]{becker2023sorting} proposed a relaxed version of the problem, introducing the notion of quasi-Wheeler NFAs. At this point, given an NFA $\mathcal{A}=(Q,\delta,\Sigma,s)$, we define as $\sim_{FS}$ the equivalence relation over $Q$, such that for each $u,v \in Q$, $u \sim_{FS} v$ holds if and only if $u$ and $v$ belong to the same part of the coarsest forward-stable partition for $\mathcal{A}$. Consequently, partition $Q/_{\sim_{FS}}$ is the coarsest forward-stable partition for $\mathcal{A}$

\begin{definition}[Quasi-Wheeler NFAs]\label{2:def:qw}
Let $\mathcal{A} = (Q,\delta,\Sigma,s)$ be an NFA. A Wheeler preorder $\leq$ of $\mathcal{A}$ is a total preorder over $Q$ such that:
\begin{enumerate}
    \item $\sim_{FS}$ is the equivalence relation induced by $\leq$.
    \item The quotient automaton $\mathcal{A}/_{\sim_{FS}}$ is a Wheeler NFA, and the total order $\leq'$ induced by $\leq$ is a Wheeler order for $\mathcal{A}/_{\sim_{FS}}$.
\end{enumerate}
We say that $\mathcal{A}$ is a quasi-Wheeler NFA, if there is a Wheeler preorder $\leq$ of $\mathcal{A}$.
\end{definition}

Due to the fact that states in the same part of a forward-stable partition are reached by the same set of strings, it is easy to observe that the automata $\mathcal{A}$ and $\mathcal{A}/{\sim_{FS}}$ of Definition \ref{2:def:qw} recognize the same language. Thus, from an indexing perspective the NFA $\mathcal{A}/_{\sim_{FS}}$ is equally useful as $\mathcal{A}$: an index for $\mathcal{A}/_{\sim_{FS}}$ is also an index for $\mathcal{A}$. Moreover, every Wheeler NFA is also a quasi-Wheeler NFA, and there exist NFAs that are quasi-Wheeler but not Wheeler. This implies that the class of quasi-Wheeler NFAs is strictly larger than the class of Wheeler NFAs (for a formal proof see Lemma 9 and Figure 1 of the article of Becker et al.~\cite{becker2023sorting}). Probably, the most important difference between Wheeler NFAs and quasi-Wheeler NFAs is that the latter can be recognized in polynomial time. In fact, Becker et al. \cite{becker2023sorting} proposed an algorithm that, given in input an NFA $\mathcal{A}$, is able to recognize if $\mathcal{A}$ is quasi-Wheeler in $O(\lvert \delta \rvert \log \lvert Q \rvert)$ time and, if this is the case, compute a Wheeler preorder for $\mathcal A$ in the same running time. This is achieved by extending the partition refinement framework of Paige and Tarjan~\cite{paige1987three} to compute the coarsest forward-stable partition for $\mathcal{A}$ and, at the same time, a total order $\leq$ over the parts of that partition (and thus over $\mathcal{A}$'s states), in such a way that if $\mathcal{A}$ is quasi-Wheeler, then $\leq$ is a Wheeler preorder of $\mathcal{A}$.

\paragraph{$p$-Sortable NFAs.}
Despite the fact that Wheeler NFAs can be indexed and compressed almost optimally, a major limitation regarding Wheeler NFAs is that the class of Wheeler NFAs is very limited. In fact, Wheeler languages, i.e., languages that are recognized by a Wheeler NFA, are star-free and closed only under intersection~\cite{alanko2021wheeler}. Moreover, also the amount of ``nondeterminism'' within a Wheeler NFA is bounded, since every Wheeler NFA admits an equivalent Wheeler DFA  of linear size \cite{alanko2020regular} (while in the general case there is an exponential blow-up of the number of states). For these reasons, Cotumaccio and Prezza \cite[Definition 3.1]{cotumaccio2021indexing} have extended the notion of Wheeler orders by introducing the concept of co-lex orders. From here on, given an NFA $A = (Q,\delta,\Sigma,s)$, and two states $u,v \in Q$, we say that $\lambda(u) \leq \lambda(v)$ holds if and only if for each $a \in \lambda(u)$ and for each $a' \in \lambda(v)$, it holds that $a \leq a'$.

\begin{definition}[Co-lex orders]\label{2:def:col_ord}
    Let $\mathcal{A} = (Q,\delta,\Sigma,s)$ be an NFA. A co-lex order of $\mathcal{A}$ is a partial order $\leq$ over $Q$ that satisfies the following two axioms:
    \begin{enumerate}
        \item For every $u, v \in Q$, if $u < v$, then $\lambda(u) \leq \lambda(v)$.
        \item For every pair $u \in \delta_{a}(u')$ and $v \in \delta_{a}(v')$, if $u < v$, then $u' \leq v'$.
    \end{enumerate}
\end{definition}

A main difference between Wheeler orders and co-lex orders is that every NFA $\mathcal{A}=(Q,\delta,\Sigma,s)$ admits a co-lex order; in fact, the relation $\leq\,\coloneqq \{(v,v) : v \in Q\}$ is a co-lex order of $\mathcal{A}$. Given an NFA $\mathcal{A}$ and a co-lex order $\leq$ of $\mathcal{A}$, we say that $\leq$ is the \textit{maximum} co-lex order of $\mathcal{A}$ if $\leq$ is equal to the union of all co-lex orders of $\mathcal{A}$. In general, an NFA does not always admit a maximum co-lex order. Cotumaccio and Prezza also define the notion of co-lexicographic width of an NFA \cite[Definition 3.3]{cotumaccio2021indexing}:

\begin{definition}[Co-lexicographic width]\label{3:df:co_lex_wdt}
    Let $\mathcal{A}$ be an NFA.
    \begin{enumerate}
        \item The NFA $\mathcal{A}$ is \textit{p-sortable} if there exists a co-lex order $\leq$ of $\mathcal{A}$ of width $p$. 
        \item The \textit{co-lexicographic width} $\bar{p}$ of $\mathcal{A}$ is the smallest $p$ for which $\mathcal{A}$ is p-sortable.
    \end{enumerate}
\end{definition}

From Definition \ref{3:df:co_lex_wdt}, it is possible to observe that if an NFA $\mathcal{A}$ is a Wheeler NFA, then the co-lexicographic width of $\mathcal{A}$ is equal to 1, since a Wheeler order is a particular type of co-lex order that is also a total order. As Cotumaccio and Prezza shows, the co-lexicographic width $\bar{p}$ of an NFA $\mathcal{A}$ measures the space and time complexity with which $\mathcal{A}$ can be encoded and indexed. However, the problem of determining the co-lexicographic width of a given NFA is known to be NP-hard. This follows from NP-hardness of Wheelerness.

\paragraph{Indexable Partial Preorders.}

At this point, we have seen that some particular NFAs $\mathcal{A}=(Q,\delta,\Sigma,s)$ (i.e. Wheeler NFAs) can be endowed with a particular total order over $Q$ (i.e. a Wheeler order) in order to build efficient indexes on top of them. Despite the fact that the problem of determining whether or not a general NFA $\mathcal{A}$ admits a Wheeler order is NP-complete, we have seen that it is possible to compute in polynomial time a particular total preorder $\leq$ over $Q$ such that the quotient automaton $\mathcal{A}/_{\sim_{FS}}=(Q/_{\sim_{FS}},\delta/_{\sim_{FS}},\Sigma,s/_{\sim_{FS}})$ (where $\sim_{FS}$ is the equivalence relation induced by $\leq$) has the following properties: $(i)$ $\mathcal{A}$ and $\mathcal{A}/_{\sim_{FS}}$ accept the same language, $(ii)$ If $\mathcal{A}$ is Wheeler, then also $\mathcal{A}/_{\sim_{FS}}$ is Wheeler, $(iii)$ if $\mathcal{A}/_{\sim_{FS}}$ is Wheeler, then the total order $\leq'$ over $Q/_{\sim_{FS}}$ induced by $\leq$ is a Wheeler order for $\mathcal{A}/_{\sim_{FS}}$. Now it is natural to wonder whether or not these reasonings can be generalized also to arbitrary NFAs; to this end, we introduce the notion of indexable partial preorders.

\begin{definition}[Indexable partial preorders]\label{2:def:indpp}
Let $\mathcal{A} = (Q,\delta,\Sigma,s)$ be an NFA of co-lexicographic width $\bar{p}$, and let $\leq$ be a partial preorder over $Q$. Consider $\mathcal{A}/_{\sim}=(Q/_{\sim},\delta/_{\sim},\Sigma,s/_{\sim})$ the quotient automaton defined by the equivalence relation $\sim$ induced by $\leq$. We say that $\leq$ is an \textit{indexable partial preorder} for $\mathcal{A}$ if the following requirements are satisfied:
\begin{enumerate}
    \item $\mathcal{A}$ and $\mathcal{A}/_{\sim}$ accept the same language.

    \item $\mathcal{A}/_{\sim}$ has co-lexicographic width $\bar{q}$, with $\bar{q} \leq \bar{p}$.

    \item The partial order $\leq'$ over $Q/_{\sim}$ induced by $\leq$ is a co-lex order of $A/_{\sim}$ of width $\bar{q}$.
\end{enumerate}
\end{definition}

It is clear that a Wheeler preorder of an automaton $\mathcal{A}$ is also an indexable partial preorder for $\mathcal{A}$. However, in the literature there exists another example of indexable partial preorders, namely the maximum co-lex relations of Cotumaccio~\cite{cotumaccio2022graphs}. Cotumaccio defined the concept of \textit{co-lex relations} of NFAs~\cite[Definition 1]{cotumaccio2022graphs}. Each NFA $\mathcal{A}$ admits a \textit{maximum} co-lex relation, that is the co-lex relation of $\mathcal{A}$ that is equal to the union of all co-lex relations of $\mathcal{A}$. Finally, Cotumaccio demonstrated that the maximum co-lex relation of an NFA $\mathcal{A}=(Q,\delta,\Sigma,s)$ can be computed in $O(\lvert \delta \rvert^{2})$ time and that it is a partial preorder over $Q$ that satisfies the requirements of Definition \ref{2:def:indpp}. In this work, we propose a new class of indexable partial preorders, which we term \textit{coarsest forward-stable co-lex orders}. Then we demonstrate that the width of our class of indexable partial preorders is always smaller than or equal to the width of the maximum co-lex relation of Cotumaccio, and in some cases, even linearly smaller with respect to the number of the automaton's states.

\section{Coarsest Forward-Stable Co-Lex Orders}\label{sec:CSF}

\paragraph{Co-Lex Relations.}
In this section we define a new category of indexable partial preorders (Definition \ref{2:def:indpp}). We start with defining co-lex relations~\cite[Definition 1]{cotumaccio2022graphs}.

\begin{definition}[Co-lex relations]\label{2:def:col_rel}
    Let $\mathcal{A}=(Q,\delta,\Sigma,s)$ be an NFA. A co-lex relation of $\mathcal{A}$ is a reflexive relation $R$ over $Q$ that satisfies:
    \begin{enumerate}
        \item For every $u, v \in Q$, with $u \neq v$, if $(u,v) \in R$, then $\lambda(u) \leq \lambda(v)$.
        \item For every pair $u \in \delta_{a}(u')$ and $v \in \delta_{a}(v')$, with $u \neq v$, if $(u,v) \in R$, then $(u',v') \in R$.
    \end{enumerate}
\end{definition}

It follows that if a co-lex relation $R$ on an NFA $\mathcal{A}=(Q,\delta,\Sigma,s)$ is also a partial order over $Q$, then $R$ is also a co-lex order of $\mathcal{A}$. Furthermore, every co-lex order of $\mathcal{A}$ is also a co-lex relation of $\mathcal{A}$. Given an NFA $\mathcal{A}$ and a co-lex relation $R$ of $\mathcal{A}$, we say that $R$ is the \textit{maximum} co-lex relation of $\mathcal{A}$ if $R$ is equal to the union of all co-lex relations of $\mathcal{A}$. Although, an NFA $\mathcal{A}$ does not always admit a maximum co-lex order, it has been proved that every NFA admits a maximum co-lex relation, denoted hereafter as $\leq_{R}$, and that $\leq_{R}$ satisfies always transitivity, i.e., $\leq_{R}$ is always a partial preorder~\cite[Lemma 5]{cotumaccio2022graphs}. From here on, given an NFA $\mathcal{A}=(Q,\delta,\Sigma,s)$ we will denote by $\mathcal{A}/_{\sim_{R}}=(Q/_{\sim_{R}},\delta/_{\sim_{R}},\Sigma,s/_{\sim_{R}})$ the quotient automaton of $\mathcal{A}$ defined by the maximum co-lex relation $\leq_{R}$, where $\sim_{R}$ is the equivalence relation over $Q$ induced by $\leq_{R}$. The next lemma shows an important property that characterizes the maximum co-lex relation of an NFA.

\begin{lemma}\cite[Corollary 16]{cotumaccio2022graphs}\label{4:lm:max_co_lex}
Let $\leq_{R}$ be the maximum co-lex relation of an automaton $\mathcal{A}=(Q,\delta,\Sigma,s)$, and let $\mathcal{A}/_{\sim_{R}}=(Q/_{\sim_{R}},\delta/_{\sim_{R}},\Sigma,s/_{\sim_{R}})$ be the quotient automaton of $\mathcal{A}$ defined by $\leq_{R}$. Then the partial order $\leq$ over $Q/_{\sim_{R}}$ induced by $\leq_{R}$ is the maximum co-lex order of $\mathcal{A}/_{\sim_{R}}$.
\end{lemma}

Lemma \ref{4:lm:max_co_lex} is important for our purposes, because it demonstrates that the quotient automaton $\mathcal{A}/_{\sim_{R}}$ always admits a maximum co-lex order. Moreover, in Lemma~\ref{dummy:lemma:1}~(1) we observe that if an NFA $\mathcal{A}$ admits a maximum co-lex order $\leq$, then $\width(\leq)$ is smaller than or equal to the width of any co-lex order of $\mathcal{A}$, since $\leq$ must be a superset of every co-lex order of $\mathcal{A}$. This in turn demonstrates that the co-lexicographic width of $\mathcal{A}$ must be equal to the width of $\leq$. In the next lemma, we demonstrate that the partition $Q/_{\sim_{R}}$ is always a forward-stable partition for $\mathcal{A}$, though we will see later that $Q/_{\sim_{R}}$ may not necessarily be the coarsest forward-stable partition for $\mathcal{A}$.

\begin{restatable}[]{lemma}{lemmaii}\label{4:lm:mx_rl_pr_fw}
    Let $\mathcal{A}=(Q,\delta,\Sigma,s)$ be an NFA and $\leq_{R}$ its maximum co-lex relation. Consider the partition $Q/_{\sim_{R}}$, where $\sim_{R}$ is the equivalence relation over $Q$ induced by $\leq_{R}$. Then $Q/_{\sim_{R}}$ is a forward-stable partition for $\mathcal{A}$.
\end{restatable}

\paragraph{Quotient Automaton $\mathcal{A}/_{\sim_{FS}}$.}

Given an NFA $\mathcal{A}$, in the following part of this article, we will show some properties of the quotient automaton $\mathcal{A}/_{\sim_{FS}}=(Q/_{\sim_{FS}},\delta/_{\sim_{FS}},\Sigma,s/_{\sim_{FS}})$, where $Q/_{\sim_{FS}}$ is the coarsest forward-stable partition for $\mathcal{A}$. In particular, in the next lemma we show that $\mathcal{A}/_{\sim_{FS}}$ always admits a maximum co-lex order.


\begin{restatable}[]{lemma}{lemmaiii}\label{4:lm:FS_max_colex}
For an NFA $\mathcal{A}=(Q,\delta,\Sigma,s)$ let $\mathcal{A}/_{\sim_{FS}}=(Q/_{\sim_{FS}},\delta/_{\sim_{FS}},\Sigma,s/_{\sim_{FS}})$ be the quotient automaton of $\mathcal{A}$ such that $Q/_{\sim_{FS}}$ is the coarsest forward-stable partition of $\mathcal{A}$ and let $\leq_{R}'$ be the maximum co-lex relation of $\mathcal{A}/_{\sim_{FS}}$. Then $\leq_{R}'$ is the maximum co-lex order of $\mathcal{A}/_{\sim_{FS}}$.   
\end{restatable}

\paragraph{CFS orders.} 
Lemma~\ref{4:lm:FS_max_colex} not only proves that $\mathcal{A}/_{\sim_{FS}}$ always admits a maximum co-lex order, but also that the maximum co-lex order of $\mathcal{A}/_{\sim_{FS}}$ is equal to the maximum co-lex relation of $\mathcal{A}/_{\sim_{FS}}$. 
At this point we have all the ingredients to introduce \textit{coarsest forward-stable co-lex (CFS) orders}.

\begin{definition}[Coarsest forward-stable co-lex (CFS) order]\label{4:def:FS_colex}
    Let $\mathcal{A}=(Q,\delta,\Sigma,s)$ be an NFA, and let $\mathcal{A}/_{\sim_{FS}}=(Q/_{\sim_{FS}},\delta/_{\sim_{FS}},\Sigma,s/_{\sim_{FS}})$ be the quotient automaton of $\mathcal{A}$ such that $Q/_{\sim_{FS}}$ is the coarsest forward-stable partition for $\mathcal{A}$. Let $\leq$ be the maximum co-lex order of $\mathcal{A}/_{\sim_{FS}}$, then we say that the \textit{coarsest forward-stable co-lex (CFS) order} of $\mathcal{A}$, denoted as $\leq_{FS}$, is the (unique) relation over $Q$ such that, for each $u,v \in Q$, $u \leq_{FS} v$ holds if and only if $[u]_{\sim_{FS}} \leq [v]_{\sim_{FS}}$.
\end{definition}

We will now demonstrate some properties of the CFS order of an NFA. 

\begin{restatable}[]{lemma}{lemmaiv}\label{4:lm:fs_pp}
    The CFS order $\leq_{FS}$ of any NFA is a partial preorder over its states.
\end{restatable}

The following remarks follow from $\leq_{FS}$ being a partial preorder.

\begin{restatable}[]{remark}{remarki}\label{4:ob:db_obs}
    For an NFA $\mathcal{A}=(Q,\delta,\Sigma,s)$ let $\mathcal{A}/_{\sim_{FS}}=(Q/_{\sim_{FS}},\delta/_{\sim_{FS}},\Sigma,s/_{\sim_{FS}})$ be the quotient automaton with $Q/_{\sim_{FS}}$ being the coarsest forward-stable partition for $\mathcal{A}$. Moreover, let $\leq_{FS}$ and $\leq$ be the CFS order of $\mathcal{A}$ and the maximum co-lex order of $\mathcal{A}/_{\sim_{FS}}$, respectively. Then (1)~$\sim_{FS}$ is equal to the equivalence relation induced by $\leq_{FS}$, and (2)~$\leq$ is the partial order induced by $\leq_{FS}$.
\end{restatable}

\section{Relation between CFS Order and Max Co-Lex Relation}
\label{sec:CSF CR}

At this point, we have introduced the maximum co-lex relation $\leq_{R}$ and the CFS order $\leq_{FS}$ of an NFA. In this section we study the relation between $\leq_{R}$ and $\leq_{FS}$. In particular, we can prove that $\leq_{FS}$ is always a superset of $\leq_{R}$.
\begin{restatable}[]{lemma}{lemmavii}\label{4:lm:fs_rf_cr}
    For an NFA $\mathcal{A} = (Q,\delta,\Sigma,s)$ let $\leq_{R}$ be the maximum co-lex relation of $\mathcal{A}$ and let $\leq_{FS}$ be the CFS order of $\mathcal{A}$. Then $\leq_{FS}$ is a superset of $\leq_{R}$.
\end{restatable}

The following lemma then allows us to conclude that the width of the CFS order is always at most the width of the maximum co-lex relation.

\begin{restatable}[]{lemma}{dummylemmai}\label{dummy:lemma:1}
    The following two statements hold:
    (1) Let $\leq$ and $\leq'$ be two partial orders or partial preorders over the same set $U$. If $\leq$ is a superset of $\leq'$, then $\width(\leq) \leq \width(\leq')$. (2) Let $\leq$ be a partial preorder over a set $U$ and let $\leq'$ be the partial order induced by $\leq$ over $U/_{\sim}$. Then, $\width(\leq) = \width(\leq')$.
\end{restatable}

We are now ready to prove that CFS orders are a class of indexable partial preorders, see Definition \ref{2:def:indpp}. Moreover, we will show that the quotient automaton $\mathcal{A}/_{\sim_{FS}}$ has never more states than $\mathcal{A}/_{\sim_{R}}$, and that the co-lexicographic width of the former is never larger then the co-lexicographic width of the latter.

\begin{restatable}[]{theorem}{mainthi}\label{main:th:1}
    Let $\mathcal{A}=(Q,\delta,\Sigma,s)$ be an NFA, and let $\leq_{R}$ and $\leq_{FS}$ be the maximum co-lex relation and the CFS order of $\mathcal{A}$, respectively. Finally, let $\mathcal{A}/_{\sim_{R}}=(Q/_{\sim_{R}},\delta/_{\sim_{R}},\Sigma,s/_{\sim_{R}})$ and $\mathcal{A}/_{\sim_{FS}}=(Q/_{\sim_{FS}},\delta/_{\sim_{FS}},\Sigma,s/_{\sim_{FS}})$ be the quotient automata of $\mathcal{A}$ defined by $\leq_{R}$ and $\leq_{FS}$, respectively. Then:
    \begin{enumerate}
        \item $\leq_{FS}$ is an indexable partial preorder for $\mathcal{A}$.
        \item The co-lexicographic width of $\mathcal{A}/_{\sim_{FS}}$ is smaller than or equal to the co-lexicographic width of $\mathcal{A}/_{\sim_{R}}$.
        \item The number of parts in partition $Q/_{\sim_{FS}}$ is smaller than or equal to the number of parts in partition $Q/_{\sim_{R}}$. 
    \end{enumerate}
\end{restatable}
\begin{proof}
\begin{enumerate}
    \item By Lemma \ref{4:lm:fs_pp}, $\leq_{FS}$ is a partial preorder over $Q$, thus it remains to prove the three properties in Definition~\ref{2:def:indpp}. 
    (a)~Clearly, $\mathcal{A}$ and $\mathcal{A}/_{\sim_{FS}}$ accept the same language, since states in the same part of a forward-stable partition are reached by the same set of strings~\cite[Lemma 4.7]{alanko2021wheeler}.
    (b)~Let $\bar{p}$ be the co-lexicographic width of $\mathcal{A}$. We have to prove that the co-lexicographic width of $\mathcal{A}/_{\sim_{FS}}$ is smaller than or equal to $\bar{p}$. 
    Let $\leq$ be the maximum co-lex order of $\mathcal{A}/_{\sim_{FS}}$. Remark~\ref{4:ob:db_obs} yields that $\leq$ is the partial order induced by $\leq_{FS}$ and hence Lemma~\ref{dummy:lemma:1}~(2) implies that $\width(\le) = \width(\le_{FS})$. Then, Lemma~\ref{4:lm:fs_rf_cr} yields that $\leq_{FS}$ is a superset of $\le_{R}$, the maximum co-lex relation of $\mathcal{A}$, and thus Lemma~\ref{dummy:lemma:1}~(1) implies that $\width(\le_{FS}) \le \width(\le_R)$. Now let $\leq'$ be a co-lex order of $\mathcal{A}$ of width $\bar{p}$. Then, $\leq'$ is also a co-lex relation and thus $\leq_{R}$ is a superset of $\leq'$. Applying Lemma~\ref{dummy:lemma:1}~(1) again yields $\width(\leq_{R})\le \width(\leq')=\bar p$. Thus altogether
    $
        \width(\le) 
        \le \bar p
    $.
    (c)~Let $\bar{q}$ be the co-lexicographic width of $\mathcal{A}/_{\sim_{FS}}$, and let $\leq$ be the maximum co-lex order of $\mathcal{A}/_{\sim_{FS}}$. Remark~\ref{4:ob:db_obs} yields that $\leq$ is the partial order induced by $\leq_{FS}$. Hence, we have to prove that $\width(\leq)=\bar{q}$. It is clear that $\width(\leq)\ge \bar q$. For the other direction, let $\le'$ be a co-lex order of  $\mathcal{A}/_{\sim_{FS}}$ with $\width(\leq')= \bar q$. Since $\leq$ is a superset of $\leq'$, Lemma~\ref{dummy:lemma:1}~(2) yields $\width(\leq)\le\width(\le')=\bar{q}$.

    \item Remark~\ref{4:ob:db_obs} yields that the maximum co-lex order of $\mathcal{A}/_{\sim_{FS}}$ is the partial order induced by $\leq_{FS}$. Hence the co-lexicographic width of $\mathcal{A}/_{\sim_{FS}}$ is equal to $\width(\leq_{FS})$ by Lemma \ref{dummy:lemma:1}~(2). Lemma \ref{4:lm:fs_rf_cr} states that $\le_{FS}$ is a superset of $\le_R$ and thus $\width(\leq_{FS}) \le \width(\leq_R)$ according to Lemma~\ref{dummy:lemma:1}~(1). From Lemma~\ref{4:lm:max_co_lex}, we conclude that the partial order $\le$ induced by $\leq_{R}$ is the maximum co-lex order of $\mathcal A/_{\sim_R}$. Again using Lemma~\ref{dummy:lemma:1}~(2) implies $\width(\leq_{R})=\width(\le)$.
    
    \item By Lemma \ref{4:lm:mx_rl_pr_fw}, $Q/_{\sim_{R}}$ is forward-stable for $\mathcal{A}$, while $Q/_{\sim_{FS}}$ is the coarsest forward-stable partition for $\mathcal{A}$. Thus $Q/_{\sim_{R}}$ is a refinement of $Q/_{\sim_{FS}}$ and the number of parts of $Q/_{\sim_{FS}}$ is at most the number of parts of $Q/_{\sim_{R}}$.\qedhere
\end{enumerate}
\end{proof}

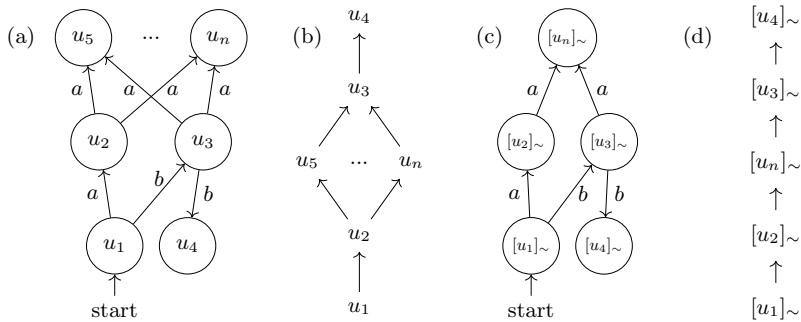
\begin{figure}[ht!]
\begin{center}
\resizebox{0.85\textwidth}{!}{
\begin{tikzpicture}
[dim/.style={minimum size=2.5em}, dots/.style={text centered}, 
scale=0.8, dim2/.style={minimum size=1.4em}, dim3/.style={minimum size=2.2em},
dim4/.style={minimum size=2.8, scale=0.75}]
    \begin{scope}
        \node[dots] (-1) at (0.5, 0.0) {(a)};
        
        \node[initial,state,initial where=below,dim] (1) at (2.3,-4) {$u_{1}$};
        \node[state,dim] (2) at (2.0,-2.0) {$u_{2}$};
        \node[state,dim] (3) at (4.0,-2.0) {$u_{3}$};
        \node[state,dim] (4) at (3.7, -4) {$u_{4}$};
        \node[state,dim] (5) at (1.7,0) {$u_{5}$};
        \node[state,dim] (6) at (4.3,0) {$u_{n}$};
        \node[dots]  (0) at (3.0,0) {...};
    
        \path [->] (1) edge node[left] {$a$} (2);
        \path [->] (1) edge node[above] {$b$} (3);
        \path [->] (2) edge node[left] {$a$} (5);
        \path [->] (2) edge node[right] {$a$} (6);
        \path [->] (3) edge node[left] {$a$} (5);
        \path [->] (3) edge node[right] {$a$} (6);
        \path [->] (3) edge node[right] {$b$} (4);
    \end{scope}

    \begin{scope}[xshift=4cm]
        \node[dots] (-1) at (2.0, 0.0) {(b)};
        \node[dim2] (1) at (3,-5.2) {$u_{1}$};
        \node[dim2] (2) at (3,-3.8) {$u_{2}$};
        \node[dim2] (5) at (2.0,-2.4) {$u_{5}$};
        \node[dim2] (6) at (4.0,-2.4) {$u_{n}$};
        \node[dim2] (3) at (3,-1.0) {$u_{3}$};
        \node[dim2] (4) at (3,0.4) {$u_{4}$};
        \node[dots]  (0) at (3.0,-2.4) {...};
        \node[dots] (-2) at (1.6, 0.0) {};
    
        \path [->] (1) edge node[left] {} (2);
        \path [->] (2) edge node[left] {} (5);
        \path [->] (2) edge node[left] {} (6);
        \path [->] (6) edge node[left] {} (3);
        \path [->] (5) edge node[left] {} (3);
        \path [->] (3) edge node[right] {} (4);
    \end{scope}

    \begin{scope}[xshift=8cm]
        \node[dots] (-1) at (1.5, 0.0) {(c)};
        \node[initial,state,initial where=below,dim4] (1) at (2.3,-4) {$[u_{1}]_{\sim}$};
        \node[state, dim4] (2) at (2.2,-2) {$[u_{2}]_{\sim}$};
        \node[state, dim4] (3) at (3.8,-2) {$[u_{3}]_{\sim}$};
        \node[state, dim4] (4) at (3.7,-4) {$[u_{4}]_{\sim}$};
        \node[state, dim4] (5) at (3,0) {$[u_{n}]_{\sim}$};
        \node[dots] (-2) at (1.2, 0.0) {};
    
        \path [->] (1) edge node[left] {$a$} (2);
        \path [->] (1) edge node[right] {$b$} (3);
        \path [->] (2) edge node[left] {$a$} (5);
        \path [->] (3) edge node[right] {$a$} (5);
        \path [->] (3) edge node[right] {$b$} (4);
    \end{scope}
    \begin{scope}[xshift=12cm]
        \node[dots] (-1) at (1.5, 0.0) {(d)};
        \node[dim3] (1) at (3,-5.2) {$[u_{1}]_{\sim}$};
        \node[dim3] (2) at (3,-3.8) {$[u_{2}]_{\sim}$};
        \node[dim3] (5) at (3,-2.4) {$[u_{n}]_{\sim}$};
        \node[dim3] (3) at (3,-1.0) {$[u_{3}]_{\sim}$};
        \node[dim3] (4) at (3.0,0.4) {$[u_{4}]_{\sim}$};
        \node[dots] (-2) at (1.7, 0.0) {};
    
        \path [->] (1) edge node[left] {} (2);
        \path [->] (2) edge node[left] {} (5);
        \path [->] (5) edge node[left] {} (3);
        \path [->] (3) edge node[left] {} (4);
    \end{scope}
\end{tikzpicture}
}
\end{center}
\caption{(a) An NFA $\mathcal{A}=(Q,\delta,\Sigma,s)$ with $\Sigma = \{a,b\}$,
$Q=\{u_{1},...,u_{n}\}$, with $n > 4$, where $u_{1}=s$ is the initial state, and $\delta$ is s.t.\ $u_{2} \in \delta_{a}(u_{1})$, $u_{3} \in \delta_{b}(u_{1})$, $u_{4}\ \in \delta_{b}(u_{3})$ and for each $4 < i \leq n$, $u_{i} \in \delta_{a}(u_{2})$, $u_{i} \in \delta_{a}(u_{3})$. We observe that $\mathcal{A}/_{\sim_{R}}$ is equal to $\mathcal{A}$ itself. Note that the example is not trivial, as states $u_{2}$ and $u_{3}$ may not be merged without changing the language of $\mathcal{A}$ (due to state $u_{4}$). (b) The Hasse diagram of the maximum co-lex order of $\mathcal{A}/_{\sim_{R}}$, where $\{u_{5},...,u_{n}\}$ forms a largest antichain, consequently the co-lexicographic width of $\mathcal{A}/_{\sim_{R}}$ is equal to $n-4$. (c) The automaton $\mathcal{A}/_{\sim_{FS}}$ consisting of five states, where for each $4 < i \leq n$, $u_{i} \in [u_{n}]_{\sim_{FS}}$. (d) The Hasse diagram of the maximum co-lex order of $\mathcal{A}/_{\sim_{FS}}$. Note that this total order is also a Wheeler order of $\mathcal{A}/_{\sim_{FS}}$, i.e., the co-lexicographic width of $\mathcal{A}/_{\sim_{FS}}$ is equal to 1.}
\label{4:fg:ex1}
\end{figure}

We now prove another important result regarding the automata $\mathcal{A}/_{\sim_{FS}}$ and $\mathcal{A}/_{\sim_{R}}$. Namely, that the number of states and the co-lexicographic width of $\mathcal{A}/_{\sim_{FS}}$ can be in some cases arbitrary smaller with respect to those of $\mathcal{A}/_{\sim_{R}}$.

\begin{restatable}[]{theorem}{mainthii}\label{main:th:2}
    There exists a class of NFAs such that, for each NFA $\mathcal{A}=(Q,\delta,\Sigma,s)$ in this class; the quotient automaton $A/_{\sim_{R}}$, defined by the maximum co-lex relation of $\mathcal{A}$, has a number of states and a co-lexicographic width equal to $O(\lvert Q \rvert)$, while the quotient automaton $A/_{\sim_{FS}}$, defined by the CFS order of $\mathcal{A}$, has a number of states and a co-lexicographic width equal to $O(1)$.
\end{restatable}
\begin{proof}
    We describe the structure of this class of NFAs in Figure \ref{4:fg:ex1}. In particular, in this figure we can observe that the automaton $\mathcal{A}/_{\sim_{FS}}$ has always co-lexicographic width equal to 1 and it is a Wheeler NFA.
\end{proof}

Finally, we show that the CFS order of an automaton $\mathcal{A}=(Q,\delta,\Sigma,s)$  can be computed in $O(\lvert \delta \rvert^{2})$ time.

\begin{corollary}\label{unique:cor}
    Let $\mathcal{A}=(Q,\delta,\Sigma,s)$ be an NFA. We can compute the CFS order $\leq_{FS}$ of $\mathcal{A}$ in $O(\lvert \delta \rvert^{2})$ time.
\end{corollary}

\begin{proof}
    It is possible to use the partition refinement framework of Paige and Tarjian \cite{paige1987three} to compute the coarsest forward-stable partition of $\mathcal{A}$ and consequently the quotient automaton $\mathcal{A}/_{\sim_{FS}}=(Q/_{\sim_{FS}},\delta/_{\sim_{FS}},\Sigma,s_{\sim_{FS}})$ in $O(\lvert \delta \rvert \log \lvert Q \rvert)$ time. After that, we can compute the maximum co-lex relation $\leq_{R}'$ of $\mathcal{A}/_{\sim_{FS}}$ using Cotumaccio's algorithm \cite[Theorem 8]{cotumaccio2022graphs} in $O(\lvert \delta \rvert^{2})$ time, where by Lemma \ref{4:lm:FS_max_colex} we know that $\leq_{R}'$ is also the maximum co-lex order of $\mathcal{A}/_{\sim_{FS}}$. Once we have computed $\leq_{R}'$, we can also reconstruct $\leq_{FS}$.
\end{proof}

\begin{credits}
\subsubsection{\ackname} We would like to thank Nicola Cotumaccio for insightful discussions on the topic of this paper. Ruben Becker, Sung-Hwan Kim, Nicola Prezza, and Carlo Tosoni are funded by the European Union (ERC, REGINDEX, 101039208). Views and opinions expressed are however those of the author(s) only and do not necessarily reflect those of the European Union or the European Research Council. Neither the European Union nor the granting authority can be held responsible for them. 

\end{credits}

\newpage 
\bibliographystyle{splncs04}
\bibliography{Chapters/bibliography}

\newpage
\appendix
\section{Deferred Material}\label{appendix: deferred material}
\subsection{Deferred Proofs for Section~\ref{sec:CSF}}
We give the defered proofs for Section~\ref{sec:CSF}.
\lemmaii*
\begin{proof}
    Let $S, T$ be two arbitrary parts of partition $Q/_{\sim_{R}}$. We want to demonstrate that for each $a \in \Sigma$ either $S \subseteq \delta_{a}(T)$ or $S \cap \delta_{a}(T) = \emptyset$ holds. Let $a \in \Sigma$ and suppose for contradiction that there exists $u \in S \cap \delta_{a}(T)$ and $v \in S \setminus \delta_{a}(T)$. Then there exists $u' \in T$ such that $u \in \delta_{a}(u')$ and $S = [u]_{\sim_{R}}$, $T = [u']_{\sim_{R}}$. Since $v$ and $u$ belong to the same part $S$ of $Q/_{\sim_{R}}$, it must hold that $v \leq_{R} u$ and $u \leq_{R} v$. It implies that, due to Axiom 1 of Definition \ref{2:def:col_rel}, $\lambda(v) \leq \lambda(u)$ and $\lambda(u) \leq \lambda(v)$, which can simultaneously hold if and only if $\lambda(v) = \lambda(u) = \{a\}$. It follows that there exists $v' \in Q$ such that $v \in \delta_{a}(v')$. Due to Axiom 2 of Definition \ref{2:def:col_rel}, since $v \leq_{R} u$ we have that $v' \leq_{R} u'$, moreover, since $u \leq_{R} v$ we have that $u' \leq_{R} v'$ and so $u' \sim_{R} v'$. Therefore $v' \in [u']_{\sim_{R}}$ and so $v' \in T$. Therefore, it follows that $v \in \delta_{a}(T)$, a contradiction.
\end{proof}

\lemmaiii*
\begin{proof}
    As the set of co-lex relations is a superset of the set of co-lex orders, it suffices to show that $\le_R'$ is a co-lex order. By Definition $\le_R'$ is reflexive and by maximality $\le_R'$ is transitive~\cite[Lemma 5]{cotumaccio2022graphs}. Hence, it remains to show antisymmetry.
    
    Therefore, let us suppose for the purpose of contradiction that $\le_R'$ is not antisymmetric, and let us consider the equivalence relation $\sim_{R}'$ over $Q/_{\sim_{FS}}$ induced by $\leq_{R}'$. Moreover, let us consider the partition $\mathcal{P}$ of $Q$ such that states $u,v \in Q$ belong to the same part of $\mathcal{P}$ if and only if $[u]_{\sim_{FS}} \sim_{R}' [v]_{\sim_{FS}}$ holds. Clearly, partition $Q/_{\sim_{FS}}$ is a refinement of $\mathcal{P}$ since every part of $Q/_{\sim_{FS}}$ is contained in a part of $\mathcal{P}$. However, as $\leq_{R}'$ is not antisymmetric, it is easy to observe that the reverse does not hold, and therefore $\mathcal{P}$ is not a refinement of $Q/_{\sim_{FS}}$. 
    
    At this point, we want to prove that $\mathcal{P}$ is a forward-stable partition for $\mathcal{A}$. Let $S, T$ be two arbitrary parts of $\mathcal{P}$. We have to prove that $\forall a \in \Sigma$, $S \subseteq \delta_{a}(T)$ or $S \cap \delta_{a}(T) = \emptyset$ hold. Let $a \in \Sigma$ and suppose for contradiction that there exists $u \in \delta_{a}(T) \cap S$, and $v \in S \setminus \delta_{a}(T)$, so we know that there exists $u' \in T$, such that $u \in \delta_{a}(u')$. Clearly, $v \notin [u]_{\sim_{FS}}$, since if $v \in [u]_{\sim_{FS}}$, then, due to the fact that $Q/_{\sim_{FS}}$ is forward-stable, it must exist a state $v' \in [u']_{\sim_{FS}}$, such that $v \in \delta_{a}(v')$. Therefore we have $[v]_{\sim_{FS}} \neq [u]_{\sim_{FS}}$. Since $[v]_{\sim_{FS}} \sim_{R}' [u]_{\sim_{FS}}$, and due to Axiom 1 of Definition \ref{2:def:col_rel}, we have that $\lambda([v]_{\sim_{FS}})=\lambda([u]_{\sim_{FS}})=\{a\}$. It follows that there exists a state $[v']_{\sim_{FS}} \in Q/_{\sim_{FS}}$ such that $[v]_{\sim_{FS}} \in \delta/_{\sim_{FS}}([v']_{\sim_{FS}}, a)$. Due to Axiom 2 of Definition \ref{2:def:col_rel}, since $[u]_{\sim_{FS}} \leq_{R}' [v]_{\sim_{FS}}$, it follows that $[u']_{\sim_{FS}} \leq_{R}' [v']_{\sim_{FS}}$, and since $[v]_{\sim_{FS}} \leq_{R}' [u]_{\sim_{FS}}$, it follows that $[v']_{\sim_{FS}} \leq_{R}' [u']_{\sim_{FS}}$; therefore $[v']_{\sim_{FS}} \sim_{R}' [u']_{\sim_{FS}}$ holds.
    Due to the fact that $Q/_{\sim_{FS}}$ is forward-stable, there exists $v'' \in [v']_{\sim_{FS}}$ such that $v \in \delta_{a}(v'')$, moreover, since $[u']_{\sim_{FS}} \sim_{R}' [v'']_{\sim_{FS}}$, it follows that $v'' \in T$, which in turn proves that $v \in \delta_{a'}(T)$. Therefore $\mathcal{P}$ is a forward-stable partition for $\mathcal{A}$; however, $\mathcal{P}$ is not a refinement of $Q/_{\sim_{FS}}$, this contradicts the hypothesis that $Q/_{\sim_{FS}}$ is the coarsest forward-stable partition of $\mathcal{A}$. It follows that it was absurd to assume that $\leq_{R}'$ was not antisymmetric.
\end{proof}

\lemmaiv*

\begin{proof}
    We have to prove that $\leq_{FS}$ satisfies reflexivity and transitivity. Let us consider the maximum co-lex order $\leq$ of the quotient automaton $\mathcal{A}/_{\sim_{FS}}=(Q/_{\sim_{FS}},\delta/_{\sim_{FS}},\Sigma,s/_{\sim_{FS}})$. By definition of co-lex order, $\leq$ is a partial order, therefore it satisfies reflexivity and transitivity.
    Let us consider an arbitrary state $u \in Q$. By Definition \ref{4:def:FS_colex}, $u \leq_{FS} u$ if and only if $[u]_{\sim_{FS}} \leq [u]_{\sim_{FS}}$; however $[u]_{\sim_{FS}} \leq [u]_{\sim_{FS}}$ holds since $\leq$ is reflexive, therefore also $u \leq_{FS} u$ must hold. It follows that $\leq_{FS}$ satisfies reflexivity.
    Let us consider three arbitrary states $u,v,z \in Q$ such that $u \leq_{FS} v$ and $v \leq_{FS} z$. We have to prove that $u \leq_{FS} z$. Since $u \leq_{FS} v$ we have that $[u]_{\sim_{FS}} \leq [v]_{\sim_{FS}}$, and since $v \leq_{FS} z$ we have that $[v]_{\sim_{FS}} \leq [z]_{\sim_{FS}}$. Moreover, due to the fact that $\leq$ is transitive, we have also that $[u]_{\sim_{FS}} \leq [z]_{\sim_{FS}}$, and consequently $u \leq_{FS} z$. It follows that $\leq_{FS}$ satisfies transitivity.
\end{proof}

\remarki*
\begin{proof}
    For~(1), observe that $\leq$ is a partial order and thus satisfies antisymmetry. Therefore, given two states $u,v \in Q$, $u \leq_{FS} v$ and $v \leq_{FS} u$ can both hold if and only if $u$ and $v$ belong to the same part of the coarsest forward-stable partition $Q/_{\sim_{FS}}$ for $\mathcal{A}$, i.e., if and only if $u \sim_{FS} v$.
    For~(2), note that $\leq_{FS}$ is a partial preorder over $Q$, while $\leq$ is a partial order over $Q/_{\sim_{FS}}$. Moreover, $u \leq_{FS} v$ holds if and only if $[u]_{\sim_{FS}} \leq [v]_{\sim_{FS}}$ holds.
\end{proof}

\subsection{Deferred Proofs for Section~\ref{sec:CSF CR}}
Before proving Lemma~\ref{4:lm:fs_rf_cr}, we have to repeat some concepts from the work of Cotumaccio, starting from preceding pairs~\cite[Definition 6]{cotumaccio2022graphs}.

\begin{definition}[Preceding pairs]\label{4:def:prec_pair}
    Let $\mathcal{A}=(Q,\delta,\Sigma,s)$ be an NFA and let $(u',v'),(u,v) \in Q \times Q$ be pairs of distinct states. The pair $(u',v')$ precedes $(u,v)$ in $\mathcal{A}$ if there are $u_{1},...,u_{r}, v_{1},...,v_{r} \in Q$ with $r \geq 1$ and $a_{1}, ..., a_{r-1} \in \Sigma$ s.t.\ 
    (1)~$u_{1} = u'$ and $v_{1} = v'$, 
    (2)~$u_{r} = u$ and $v_{r} = v$,
    (3)~for all $i = 1, \ldots , r-1$, $u_{i} \neq v_{i}$, and 
    (4)~for all $i = 1, \ldots , r-1$, $u_{i+1} \in \delta_{a_{i}}(u_{i})$ and $v_{i+1} \in \delta_{a_{i}}(v_{i})$.
\end{definition}

Cotumaccio characterized the maximum co-lex relation in terms of preceding pairs by showing that $u\leq_{R} v$ holds if and only if for all pairs $(u',v')$ preceding $(u,v)$ it holds that $\lambda(u') \leq \lambda(v')$~\cite[Lemma 7]{cotumaccio2022graphs}.
We have the following lemma regarding preceding pairs in quotient automata.
\begin{restatable}[]{lemma}{lemmavi}\label{4:lemma:prec_pair_qa}
    Let $\mathcal{A}=(Q,\delta,\Sigma,s)$ be an NFA, and let $\mathcal{A}/_{\sim}=(Q/_{\sim},\delta/_{\sim},\Sigma,s/_{\sim})$ be a quotient automaton of $\mathcal{A}$ such that $Q/_{\sim}$ is a forward-stable partition for $\mathcal{A}$. Consider two pairs $([u]_{\sim},[v]_{\sim}), ([u']_{\sim},[v']_{\sim})$, with $[u]_{\sim},[v]_{\sim},[u']_{\sim},[v']_{\sim} \in Q/_{\sim}$, such that $([u']_{\sim},[v']_{\sim})$ precedes $([u]_{\sim},[v]_{\sim})$ in $\mathcal{A}/_{\sim}$. Then, there exist $u'',v'' \in Q$, with $u'' \in [u']_{\sim}, v'' \in [v']_{\sim}$, such that $(u'',v'')$ precedes $(u,v)$ in $\mathcal{A}$.
\end{restatable}
\begin{proof}
        Since $([u']_{\sim},[v']_{\sim})$ precedes $([u]_{\sim},[v]_{\sim})$ in $\mathcal{A}/_{\sim}$, we know that there exists a sequence $[u_{1}]_{\sim},...,[u_{r}]_{\sim},[v_{1}]_{\sim},...,[v_{r}]_{\sim}$ that satisfies the requirements of Definition \ref{4:def:prec_pair}. The proof proceeds by induction over the parameter $r$. Firstly, let us prove the statement for $r=1$. We know that $u,v$ are distinct states, because $[u]_{\sim}, [v]_{\sim}$ are distinct, and therefore pair $(u,v)$ trivially precedes itself in $\mathcal{A}$. Now let us prove the statement for a general $r$. Let us consider the sequence $[u_{2}]_{\sim},...,[u_{r}]_{\sim},[v_{2}]_{\sim},...,[v_{r}]_{\sim}$, due to the inductive hypothesis, we know that there exist $u^{*} \in [u_{2}]_{\sim}$ and $v^{*} \in [v_{2}]_{\sim}$, such that $(u^{*},v^{*})$ precedes $(u,v)$ in $\mathcal{A}$. Since $[u_{2}]_{\sim} \in \delta/_{\sim}([u_{1}]_{\sim}, a_{1})$ and $[v_{2}]_{\sim} \in \delta/_{\sim}([v_{1}], a_{1})$, and due to the definition of quotient automaton, we know that there exist $\bar{u}_{1},\bar{u}_{2},\bar{v}_{1},\bar{v}_{2} \in Q$ such that: $\bar{u}_{1} \in [u_{1}]_{\sim}$, $\bar{u}_{2} \in [u_{2}]_{\sim}$, $\bar{v}_{1} \in [v_{1}]_{\sim}$, $\bar{v}_{2} \in [v_{2}]_{\sim}$, $\bar{u_{2}} \in \delta(\bar{u_{1}}, a_{1})$, $\bar{v_{2}} \in \delta(\bar{v_{1}}, a_{1})$. Moreover, since $Q/_{\sim}$ is forward-stable, there must exist also $u'' \in [u_{1}]_{\sim}$ and $v'' \in [v_{1}]_{\sim}$, such that $u^{*} \in \delta(u'',a_{1})$ and $v^{*} \in \delta(v'',a_{1})$ (note that $u'' \neq v''$ because $[u_{1}]_{\sim} \neq [v_{1}]_{\sim}$). Moreover, since $(u^{*},v^{*})$ precedes $(u,v)$ in $\mathcal{A}$, then also $(u'',v'')$ must precede $(u,v)$ in $\mathcal{A}$. It follows that the lemma holds.
\end{proof}
We are now ready to prove Lemma~\ref{4:lm:fs_rf_cr}.
\lemmavii*
\begin{proof}
    We suppose for contradiction that $\leq_{FS}$ is not a superset of $\leq_{R}$. Then, there is $(u,v) \in \; \leq_{R}$ with $(u,v) \notin \; \leq_{FS}$. Let $\mathcal{A}/_{\sim_{FS}}=(Q/_{\sim_{FS}},\delta/_{\sim_{FS}},\Sigma,s/_{\sim_{FS}})$ be the quotient automaton of $\mathcal{A}$ such that $Q/_{\sim_{FS}}$ is the coarsest forward-stable partition for $\mathcal{A}$, and let $\leq$ be the maximum co-lex order of $\mathcal{A}/_{\sim_{FS}}$ (which exists due to Lemma \ref{4:lm:FS_max_colex}). By Definition \ref{4:def:FS_colex}, $(u,v) \notin\; \leq_{FS}$ if and only if $([u]_{\sim_{FS}}, [v]_{\sim_{FS}}) \notin\; \leq$. In addition, $[u]_{\sim_{FS}} \neq [v]_{\sim_{FS}}$, since $\leq$ satisfies reflexivity. Due to Lemma 7 in the article by Cotumaccio~\cite{cotumaccio2022graphs}
    and because $\leq$ is also the maximum co-lex relation of $\mathcal{A}/_{\sim_{FS}}$ (see Lemma \ref{4:lm:FS_max_colex}), there exist $[u']_{\sim_{FS}}, [v']_{\sim_{FS}} \in Q/_{\sim_{FS}}$ such that the pair $([u']_{\sim_{FS}}, [v']_{\sim_{FS}})$ precedes $([u]_{\sim_{FS}}, [v]_{\sim_{FS}})$ in $\mathcal{A}/_{\sim_{FS}}$, and $\lambda([u']_{\sim_{FS}}) \leq \lambda([v']_{\sim_{FS}})$ does not hold. Due to Lemma \ref{4:lemma:prec_pair_qa}, there exist $u'' \in [u']_{\sim_{FS}}$, $v'' \in [v']_{\sim_{FS}}$ such that $(u'',v'')$ precedes $(u,v)$ in $\mathcal{A}$. Moreover, since $u'' \in [u']_{\sim_{FS}}$, we have that $\lambda(u'') = \lambda([u']_{\sim_{FS}})$, and analogously  $\lambda(v'') = \lambda([v']_{\sim_{FS}})$. Hence, if $\lambda([u']_{\sim_{FS}}) \leq \lambda([v']_{\sim_{FS}})$ does not hold, then also $\lambda(u'') \leq \lambda(v'')$ does not hold. However, again using Lemma 7 in the article by Cotumaccio~\cite{cotumaccio2022graphs}, this contradicts the assumption $(u,v) \in\;\leq_{R}$. Hence, $\leq_{FS}$ is a superset of $\leq_{R}$.
\end{proof}

\dummylemmai*
\begin{proof}
    (1) Since $\leq$ is a superset of $\leq'$, we know that for each $u,v \in U$, if $(u,v) \notin\, \leq$, then $(u,v) \notin\,\leq'$. It follows that every antichain according to $\leq$ is also an antichain according to $\leq'$. (2) Let us denote with $p$ and $p'$ the integers $\width(\leq)$ and $\width(\leq')$, respectively. Let us consider $L$, an arbitrary antichain according to $\leq$, and let us define the set $L' \coloneqq \{ [u]_{\sim} : u \in L \}$; clearly $\lvert L \rvert = \lvert L' \rvert$. It is easy to observe that $L'$ is an antichain according to $\leq'$, this proves that $p \leq p'$. Now, let us consider $L'$, an arbitrary antichain according to $\leq'$, and let us consider a possible set $L$, in such a way that $L$ contains a representative of each equivalence class $[u]_{\sim}$, such that $[u]_{\sim} \in L'$. Clearly, $L$ is an antichain according to $\leq$, and $\lvert L' \rvert = \lvert L \rvert$. This proves that $p' \leq p$. It follows that $p = p'$.
\end{proof}

\section{Relationships between Orders in Figure~\ref{fg:ex:3}}
\label{appendix: relationships}


In this section we prove some properties regarding the different relations presented in Figure~\ref{fg:ex:3}, namely \textit{Wheeler orders}, \textit{Wheeler preorders}, \textit{(maximum) co-lex orders}, \textit{(maximum) co-lex relations}, and \textit{coarsest forward-stable co-lex orders}.

\begin{proposition}
    Let $\mathcal{A}$ be an NFA. Then, the following statements hold true: 
    \begin{enumerate}
        \item If $\mathcal{A}$ admits a maximum co-lex order $\leq$, then the coarsest forward-stable co-lex order of $\mathcal{A}$ may not be equal to $\leq$.
        \item The maximum co-lex relation of $\mathcal{A}$ may not be equal to the coarsest forward-stable co-lex order of $\mathcal{A}$.
        \item If $\mathcal{A}$ admits a Wheeler preorder $\leq$ and a Wheeler order $\leq'$, then $\leq'$ may be not equal to $\leq$.
    \end{enumerate}
\end{proposition}

\begin{proof}
    \begin{enumerate}
        \item An example is shown in Figure \ref{4:fg:ex1}.
        \item An example is shown in Figure \ref{4:fg:ex1}.
        \item Let us consider the NFA $\mathcal{A}=(Q,\delta,\Sigma,s)$ with the following structure: $Q = \{u_{1}, u_{2}, u_{3}\}$, $\Sigma = \{a\}$, $s = u_{1}$ and $u_{2}, u_{3} \in \delta_{a}(u_{1})$. It is possible to check that $\leq\, \coloneqq \{(u_{1}, u_{1}), (u_{1}, u_{2}), (u_{1}, u_{3}), (u_{2}, u_{2}), (u_{2}, u_{3}), (u_{3}, u_{3}) \}$ is a Wheeler order of $\mathcal{A}$. However, 
        \[
            \leq'\,\coloneqq \{(u_{1}, u_{1}), (u_{1}, u_{2}), (u_{1}, u_{3}), (u_{2}, u_{2}), (u_{2}, u_{3}), (u_{3},u_{2}), (u_{3}, u_{3}) \}
        \]
        is a Wheeler preorder of $\mathcal{A}$, and $\leq'\;\neq\;\leq$. \qedhere
    \end{enumerate}
\end{proof}

The following proposition characterizes the case when the maximum co-lex relation and the coarsest forward-stable co-lex order are identical.

\begin{proposition} \label{thm:relationship}
    Let $\mathcal{A}$ be an NFA. Let $\le_{R}$ and $\le_{FS}$ be the maximum co-lex relation and the coarsest forward-stable co-lex order of $\mathcal{A}$, respectively. Then, the following statements hold true:
    \begin{enumerate}
        \item \label{thm:relationship:r:fs} $\le_R$ is equal to $\le_{FS}$ if and only if $Q/_{\sim_{R}}=Q/_{\sim_{FS}}$.
        \item \label{thm:relationship:o} Assume that $\mathcal{A}$ admits a maximum co-lex order $\leq$, then,
        \begin{enumerate}
            \item \label{thm:relationship:o:r} $\le_{R}$ is equal to $\leq$ if and only if $|Q|=|Q/_{\sim R}|$.
            \item \label{thm:relationship:o:fs }$\le_{FS}$ is equal to $\leq$ if and only if $|Q|=|Q/_{\sim FS}|$. \qedhere
        \end{enumerate}
    \end{enumerate}
\end{proposition}
\begin{proof}
    \begin{enumerate}
        \item $(\Rightarrow)$: Note that if, for all $u,v\in Q$, $u\le_{R}~v$ if and only if $u\le_{FS} v$, then it also holds that, for all $u,v\in Q$, $u\sim_{R} v$ if and only if $u\sim_{FS} v$. Consequently, we obtain $Q/_{\sim_{R}}= Q/_{\sim_{FS}}$.

        $(\Leftarrow)$: Assume $Q/_{\sim_{R}}=Q/_{\sim_{FS}}$. We need to prove that, for every $u,v\in Q$, $u\le_{R} v$ holds if and only if $u\le_{FS}v$ holds. The forward direction immediately follows from Lemma~\ref{4:lm:fs_rf_cr}, thus we shall prove that $u\le_{FS}v$ implies $u\le_{R} v$. Suppose for the purpose of contradiction that this is not the case, i.e., $u\le_{FS} v$, but $u\le_{R} v$ does not hold. Hence, $[u]_{\sim_{R}}\neq [v]_{\sim_{R}}$. Since $Q/_{\sim_{R}}=Q/_{\sim_{FS}}$, we have $[u]_{\sim_{FS}}=[u]_{\sim_{R}}\neq [v]_{\sim_{R}}=[v]_{\sim_{FS}}$.
        Now consider the maximum co-lex orders $\le'$ and $\le''$ for $\mathcal{A}/_{\sim_{R}}$ and $\mathcal{A}/_{\sim_{FS}}$, respectively. Their existence is guaranteed by Lemmas~\ref{4:lm:max_co_lex}~and~\ref{4:lm:FS_max_colex}.
        Moreovewr, note that $Q/_{\sim_{R}}=Q/_{\sim_{FS}}$ implies that the quotient automata $\mathcal{A}/_{\sim_{R}}$ and $\mathcal{A}/_{\sim_{FS}}$ are identical, and thus $\le'$ is equal to $\le''$.
        From $[u]_{\sim_{FS}}\neq [v]_{\sim_{FS}}$ and $u\le_{FS} v$, it follows that $[u]_{\sim_{FS}}\le'' [v]_{\sim_{FS}}$ by definition. On the other hand, it does not hold that $[u]_{\sim_{R}}\le' [v]_{\sim_{R}}$ as $u\le_{R}v$ does not hold, a contradiction.
        
        \item \begin{enumerate}
            \item $(\Rightarrow)$: Since $\le$ is antisymmetric, no distinct $u,v\in Q$ such that $u\le v$ and $v\le u$ exist. Then, the fact that, for all $u,v\in Q$, $u\le v$ holds if and only if $u\le_{R} v$, implies that no distinct $u,v\in Q$ can satisfy both $u\le_R v$ and $v\le_R u$. It follows that $Q=Q/_{\sim_{R}}$ and therefore $|Q|=|Q/_{\sim_{R}}|$.

            $(\Leftarrow)$: Since $|Q|=|Q/_{\sim_{R}}|$, for every distinct $u,v\in Q$ it holds that at most one of $u\le_R v$ and $v\le_R u$ hold. Hence, $\le_R$, the maximum co-lex relation, is antisymmetric and thus it is a maximum co-lex order. Since the maximum co-lex order of an automaton is unique (if it exists), $\le_R$ is equal to $\le$.

            \item $(\Rightarrow)$: By definition $\leq$ satisfies antisymmetry. Thus, if $\leq$ is equal to $\leq_{FS}$, then also $\leq_{FS}$ satisfies antisymmetry. Thus, let us consider the equivalence relation $\sim_{FS}$ induced by $\leq_{FS}$, $u \sim_{FS} v$ can hold if and only if $u = v$, which proves that $Q = Q/_{\sim_{FS}}$, and consequently $\lvert Q \rvert = \lvert Q/_{\sim_{FS}} \rvert$.
            
            $(\Leftarrow)$: We start from the fact that $Q/_{\sim_{R}}$ is a partition of $Q$ thus it holds that $|Q/_{\sim_{R}}|\le |Q|$. Moreover, if $|Q|=|Q/_{\sim_{FS}}|$, then by Theorem~\ref{main:th:1}, we have $|Q|=|Q/_{\sim_{FS}}|\le |Q/_{\sim_{R}}|$. Hence, we obtain $|Q/_{\sim_{R}}|=|Q/_{\sim_{FS}}|=|Q|$. Note that this means that both $Q/_{\sim_{R}}$ and $Q/_{\sim_{FS}}$ contain only singletons; in other words, we obtain $Q/_{\sim_{R}}=Q/_{\sim_{FS}}=Q$. Then the claim follows from (\ref{thm:relationship:r:fs}) and (\ref{thm:relationship:o:r}).\qedhere
        \end{enumerate}
    \end{enumerate}
\end{proof}

We now present a much stronger result: If an NFA $\mathcal{A}$ admits a maximum co-lex order $\le$, then $\le$ must be equal to the maximum co-lex relation $\le_R$ of $\mathcal{A}$. It is worth noting that this relies on the assumption of the unique initial state with no in-coming transitions. In particular, notice that Cotumaccio~\cite{cotumaccio2022graphs} showed that a general labeled graph without this assumption can admit both the maximum co-lex order and the maximum co-lex relation while the two orders are not equal. Here we show that once the unique initial state is assumed, the two orders must be equal once an NFA admits them. To prove this, we introduce the notion of \textit{distance from the source}. Given an NFA $\mathcal{A}=(Q,\delta,\Sigma,s)$, the distance of state $u$ from the source is defined as $\phi(u)=\argmin_{\alpha \in I_u} |\alpha|$.
It is clear that if $\phi(u) = r > 1$ for $u \in Q$, then there must exist $u' \in Q$ and $a \in \Sigma$ such that $u \in \delta_{a}(u')$ and $\phi(u') = r-1$.

\begin{lemma}\label{appb:max_order:lm1}
    Let $\mathcal{A}=(Q,\delta,\Sigma,s)$ be an NFA, let $\leq_{R}$ be the maximum co-lex relation of $\mathcal{A}$, and let $\sim_{R}$ be the equivalence relation induced by $\leq_{R}$. Then, for two distinct states $u, v \in Q$, it holds that $u \sim_{R} v$ implies $\phi(u)=\phi(v)$.
\end{lemma}

\begin{proof}
    Due to Lemma 7 in the article by Cotumaccio~\cite{cotumaccio2022graphs}, we know that $u \sim_{R} v$ holds if and only if, for each preceding pair $(u',v')$ of $(u,v)$, $\lambda(u') = \lambda(v') = \{a\}$, for some $a \in \Sigma$. Let us show the contrapositive of the statement, i.e., suppose that $\phi(u)\neq \phi(v)$ and we will show that this implies the existence of a preceding pair $(u',v')$ of $(u,v)$ with $\lambda(u') \neq \lambda(v')$. Let us consider $\phi(u) = r \neq r' = \phi(v)$, and let us assume, w.l.o.g., $r<r'$, the other case is symmetric. Note that, for each $0 \leq i < r$, we have $u_{r-i} \neq v_{r'-i}$ as $\phi(u_{r-i}) \neq \phi(v_{r'-i})$. The proof proceeds by induction over $r$. If $r=1$, then $(u_{r},v_{r})$ is a preceding pair of $(u,v)$ and $\lambda(s) = \lambda(u_{r}) \neq \lambda(v_{r})$, since $s$ is the only state of $\mathcal{A}$ such that $\lambda(s) = \{\#\}$. Now assume $r > 1$. Then either $\lambda(u_{r}) \neq \lambda(v_{r})$ giving us the required pair of preceding states $\lambda(u_{r}) = \lambda(v_{r}) = \{a\}$ for some $a \in \Sigma$. In this case, by the induction hypothesis, $(u_{r-1},v_{r-1})$ has a preceding pair $(u',v')$ such that $\lambda(u') \neq \lambda(v')$, and, since $u_{r} \in \delta_{a}(u_{r-1})$ and $v_{r'} \in \delta_{a}(v_{r'-1})$, $(u',v')$ is also a preceding pair for $(u,v)$.
\end{proof}

\begin{lemma}\label{appb:max_order:lm2}
    Let $\mathcal{A}=(Q,\delta,\Sigma,s)$ be an NFA, let $\leq_{R}$ be the maximum co-lex relation of $\mathcal{A}$, and let $\sim_{R}$ be the equivalence relation induced by $\leq_{R}$.
    Furthermore, let $u, v \in Q$ be such that $u \sim_{R} v$. Then, for each $u'$ such that $u \in \delta_{a}(u')$ for some $a\in \Sigma$, we have that $\phi(u')=\phi(u)-1$.
\end{lemma}
\begin{proof}
    Suppose for the purpose of contradiction that there exists $u'\in Q$ such that $u \in \delta_{a}(u')$ and $\phi(u') \neq \phi(u)-1$.
    Due to Axiom 1 of Definition~\ref{2:def:col_rel}, it holds that $\lambda(u) = \lambda(v) = \{a\}$, for some $a \in \Sigma$. Then, there exists $v' \in Q$ such that $v \in \delta_{a}(v')$ and $\phi(v')=\phi(v)-1$. Since $u \sim_{R} v$, due to Axiom 2 of Definition \ref{2:def:col_rel}, also $v' \sim_{R} u'$ must hold. However, due to Lemma \ref{appb:max_order:lm1}, we know that $u \sim_{R} v$ implies $\phi(u) = \phi(v)$. Now $\phi(u') \neq \phi(u)-1$ and $\phi(v') = \phi(v)-1$ imply $\phi(u') \neq \phi(v')$. However, using Lemma \ref{appb:max_order:lm1}, $\phi(u') \neq \phi(v')$ implies that $u' \sim_{R} v'$ does not hold, yielding the desired contradiction.
\end{proof}

\begin{proposition}
    Let $\mathcal{A}=(Q,\delta,\Sigma,s)$ be an NFA, and let $\leq_{R}$ be the maximum co-lex relation of $\mathcal{A}$. Then, if $\mathcal{A}$ admits a maximum co-lex order $\leq$, we have that $\leq$ is equal to $\leq_{R}$.
\end{proposition}

\begin{proof}
    Since every co-lex order is also a co-lex relation, and since we know that $\leq_{R}$ satisfies reflexivity and transitivity~\cite[Lemma 5]{cotumaccio2022graphs}, if $\leq_{R}$ satisfies also antisymmetry, then $\leq\,=\, \leq_{R}$ must hold. 
    Therefore, let us suppose for contradiction that $\leq_{R}$ is not antisymmetric. It means that there exist $u,v \in Q$, with $u \neq v$, such that $u \sim_{R} v$, where $\sim_{R}$ is the equivalence relation induced by $\leq_{R}$. By Lemma \ref{appb:max_order:lm1}, we know that if $u \sim_{R} v$ holds, and $u \neq v$, then $\phi(u) = \phi(v)$. Now, let us consider a pair $(\bar{u},\bar{v})$, with $\bar{u},\bar{v} \in Q$ and $\bar{u} \neq \bar{v}$, that satisfies the following properties: (i)~$\bar{u} \sim_{R} \bar{v}$ holds. (ii)~There is no pair $(u^{*},v^{*})$, with $u^{*},v^{*} \in Q$ and $u^{*} \neq v^{*}$, such that $u^{*} \sim_{R} v^{*}$ and $\phi(u^{*}) = \phi(v^{*}) < \phi(\bar{u}) = \phi(\bar{v})$. Due to Axiom 1 of Definition \ref{2:def:col_rel}, we know that $\lambda(\bar{u})=\lambda(\bar{v})=\{a\}$, for some $a \in \Sigma$. Therefore there exist $u',v' \in Q$, such that $\bar{u} \in \delta_{a}(u')$ and $\bar{v} \in \delta_{a}(v')$. Due to Lemma \ref{appb:max_order:lm2}, we know that $\phi(u') = \phi(v') = \phi(\bar{u}) - 1$. Due to Axiom 2 of Definition \ref{2:def:col_rel}, we know that $u' \sim_{R} v'$. We can observe that if $u' \neq v'$, then the pair $(u',v')$ would satisfy the requirements $u' \neq v'$, $u' \sim_{R} v'$, and $\phi(u') = \phi(v') < \phi(\bar{u}) = \phi(\bar{v})$, it follows that $u' = v'$ must hold. Now let us define the two relations $\leq'\; \coloneqq \{(u,u) : u \in Q\}\cup \{(\bar{u},\bar{v})\}$ and $\leq''\; \coloneqq \{(u,u) : u \in Q\}\cup \{(\bar{v},\bar{u})\}$. It follows that both $\leq'$ and $\leq''$ are co-lex orders of $\mathcal{A}$, however $\leq$ cannot be a superset of both relations as it is antisymmetric, a contradiction.\qedhere
\end{proof}

} {

\begin{abstract}
An index on a finite-state automaton is a data structure able to locate specific patterns on the automaton's paths and consequently on the regular language accepted by the automaton itself. Cotumaccio and Prezza [SODA '21], introduced a data structure able to solve pattern matching queries on automata, generalizing the famous FM-index for strings of Ferragina and Manzini [FOCS '00]. The efficiency of their index depends on the width of a particular partial order of the automaton’s states, the smaller the width of the partial order, the faster is the index. However, computing the partial order of minimal width is NP-hard. This problem was mitigated by Cotumaccio [DCC '22], who relaxed the conditions on the partial order, allowing it to be a partial preorder. This relaxation yields the existence of a unique partial preorder of minimal width that can be computed in polynomial time. In the paper at hand, we present a new class of partial preorders and show that they have the following useful properties: (i) they can be computed in polynomial time, (ii) their width is never larger than the width of Cotumaccio's preorders, and (iii) there exist infinite classes of automata on which the width of Cotumaccio's pre-order is linearly larger than the width of our preorder.




\end{abstract}

\keywords{Nondeterministic Finite Automata \and Graph Indexing \and Forward-Stable Partitions \and FM-index.}


\section{Introduction}


The Burrows-Wheeler transform (BWT) is a famous reversible string transformation~\cite{burrows1994block}. While the BWT was initially conceived as a compression tool, it has subsequently been used to implement the FM index~\cite{manziniFM} that is able to locate patterns in almost optimal time, while efficiently compressing the strings at the same time. This indexing strategy has been extended to some finite automata (or directed edge-labeled graphs), the so-called \textit{Wheeler graphs}, by Gagie et al.~\cite{gagie2017wheeler}. Wheeler graphs are a particular class of automata that can be succinctly encoded by a representation that allows pattern matching queries on the strings labeling directed paths in the automaton in almost optimal time. This indexing property relies on the fact that the states of the automaton can be totally ordered in a way that is consistent with the co-lexicographic order of the strings accepted by the automaton's states. Such an order of the states is called a \textit{Wheeler order}. As not all automata admit a Wheeler order, this strategy may fail to apply and hence not all finite automata are Wheeler. Cotumaccio and Prezza~\cite{cotumaccio2021indexing} extended the notion to arbitrary finite automata by not restricting the state order to be total, instead allowing arbitrary partial orders, called \textit{co-lex orders}. As any finite automaton admits such a co-lex order, this enables us to build an efficient index for the language accepted by any automaton in a very similar fashion as one does with the original FM index for a given string. The co-lex order however is not in general unique and, furthermore, the choice of the co-lex order is of crucial importance for the efficiency of the index. More precisely, the efficiency is directly characterized by the \textit{width} of the particular co-lex order: the lower the width of the co-lex order is, the faster pattern matching queries are and the smaller the index is. Unfortunately, computing the co-lex order of minimal width is NP-hard~\cite{cotumaccio2021indexing}. Cotumaccio~\cite{cotumaccio2022graphs} introduced \textit{co-lex relations} by relaxing the requirements on co-lex orders, allowing them to be reflexive relations, i.e., unrequiring antisymmetry and transitivity. This relaxation yields that a unique maximum co-lex relation always exists. Moreover, this maximum is a preorder (i.e., transitivity is recovered through maximality). Maybe most importantly though, the maximum co-lex relation can be computed in polynomial time (more precisely in quadratic time in the number of transitions). The width of the maximum co-lex relation is at most the width of every co-lex order and may be asymptotically smaller. Hence, the index built based on the maximum co-lex relation is never slower and can be asymptotically 
faster than those built based on co-lex orders~\cite{cotumaccio2021indexing}.

\paragraph{Our contribution.}
In this paper, we present a new category of preorders that we call \textit{coarsest forward-stable co-lex (CFS) orders}. CFS orders are equally useful as maximum co-lex relations for constructing indices on automata. Similar to co-lex orders being the generalization of Wheeler orders to width larger than one, CFS orders are the generalization of Wheeler preorders (introduced by Becker et al.~\cite{becker2023sorting}) to width larger than one. As shown by Cotumaccio for maximum co-lex relations, we can show that our CFS orders always exist and are unique. Furthermore, for a given automaton, the unique CFS order can be computed in quadratic time in the number of transitions as well. Most importantly however, we prove that the width of the 
CFS order is never larger than the width of the maximum co-lex relation and, 
in some cases, asymptotically smaller. 

In Figure~\ref{fg:ex:3}, we show the relationships among the previously  mentioned concepts for building indices on automata. All previously known orders shown in Figure~\ref{fg:ex:3} are formally introduced in Section~\ref{sec:not}, where we describe the preliminaries for this work. In Section~\ref{sec:CSF} we introduce coarsest forward-stable co-lex (CFS) orders and in Section~\ref{sec:CSF CR} we prove the claimed properties of these preorders.

\begin{figure}[ht]
\begin{center}
\resizebox{0.9\textwidth}{!}{
\begin{tikzpicture}
[rel/.style={shape = rectangle, align = center, minimum height = 3.75em, minimum width = 7em, font=\footnotesize, draw=black, semithick},
col1/.style={fill=SkyBlue},
col2/.style={fill=YellowOrange},
edges/.style={font=\scriptsize, thick},
widthcol/.style={draw=BlueViolet},
cat/.style={dashed, thin},
catnod/.style={font=\scriptsize}]

    \node[rel, col2] (1) at (9.65,0) {Co-lex relation};
    \node[rel, col2] (2) at (4.9,0) {Co-lex order};
    \node[rel, col2] (3) at (9.65,2.5) {Maximum\\co-lex relation};
    \node[rel, col1] (4) at (4.9,2.5) {Maximum\\co-lex order};
    \node[rel, col2] (5) at (4.9,5) {CFS order};
    \node[rel, col1] (6) at (0.15,2.5) {Wheeler\\preorder};
    \node[rel, col1] (7) at (0.15,0) {Wheeler order};

    \begin{scope}[on background layer]
        \draw[cat] (-1.15, 4.5) rectangle (2.5,-0.75);
        \node[catnod, align = right] at (1.8, 4.15) {\textit{orders of}\\\textit{width 1}};
        \draw[cat] (3.4,-0.75) rectangle (10.95, 6);
        \node[catnod] at (9.2,5.75) {\textit{
        orders of arbitrary width
        }};
    \end{scope}
    
    \draw[->, edges] (1) edge node [below, align = center] {
    is antisymmetric,\\is transitive
    } (2);
    \draw[->, edges] (1) edge node [left, align = right] {
    is the union of\\all co-lex relations
    } (3);
    \draw[->, edges] (2) edge node [left, align = right, fill = white] {
    is the union of\\all co-lex orders
    } (4);
    \draw[->, edges] (3) edge node [above] {
    is antisymmetric
    } (4);
    \path[->, edges] (5) edge node [left, align = right, fill = white] {
    is\\antisymmetric
    } (4);
    
    \draw[->, edges] (3) --  (9.65,5) -- node [below, align = center] {
    induces the coarsest\\forward-stable partition
    } (5);

    \draw[->, edges] (5) -- node [above, align = center]  {
    is a total preorder,\\$\mathcal{A}$ is input consistent
    } (0.15,5) -- (6);
    
    \draw[->, edges] (6) edge node [right, align = left] {
    is a total\\order
    } (7);

    \draw[->, edges] (2) edge node [below, align = center, fill=white]{
    is a total order,\\$\mathcal{A}$ is input\\consistent
    } (7);

    \draw[->, edges, widthcol, bend right = 35] (1) edge (3);

    \draw[->, edges, widthcol, bend right = 35] (2) edge (4);

    \draw[->, edges, widthcol, bend right = 35] (4) edge (5);

    \draw[->, edges, widthcol, bend right = 15] (4) edge (3);

    \draw[->, edges, widthcol] (3) edge (5);
    
\end{tikzpicture}
}
\end{center}
\caption{An NFA $\mathcal{A}$ is \textit{input consistent} if for each state $u$ in $\mathcal{A}$, all incoming edges of $u$ are labeled with the same character. 
We show the connections between the different relations described. 
A relation is orange, if every automaton always admits an instance of that relation, and it is blue otherwise. Orders on the left, i.e., \textit{Wheeler preorders} and \textit{Wheeler orders}, are of width 1, while the others may be of arbitrary width. 
A blue edge $A \rightarrow B$ means that any relation of type $A$ has always a width larger than or equal to a relation of type $B$. A black edge $A \xrightarrow{c} B$ means a relation of type $A$ is also a relation of type $B$ if it satisfies the requirements $c$. In this case, a relation of type $B$ is always also a relation of type $A$ with the following exceptions: (i) The \textit{coarsest forward-stable co-lex order} may not be equal to the \textit{maximum co-lex relation}. (ii) If the \textit{maximum co-lex order} exists, then it may not be equal to the \textit{coarsest forward-stable co-lex order}. (iii) A \textit{Wheeler order} may not be equal to the \textit{Wheeler preorder}. All implications either directly follow from their definitions or are proved in Appendix~\ref{appendix: relationships}.}
\label{fg:ex:3}
\end{figure}

\section{Preliminaries}\label{sec:not}

\paragraph{Nondeterministic finite automata.}
A nondeterministic finite automaton (NFA) is a 4-tuple $(Q,\delta,\Sigma,s)$, where $Q$ represents the set of the states, $\delta : Q \times \Sigma \rightarrow 2^{Q}$ is the automaton's transition function, $\Sigma$ is the alphabet and $s \in Q$ is the initial state. The standard definition of NFAs includes also a set of final states that we omit since we are not concerned in distinguishing between final and non-final states. We assume the alphabet $\Sigma$ to be effective, i.e., each character of $\Sigma$ labels at least one edge of the transition function. A deterministic finite automaton (DFA) is an NFA such that each state has at most one outgoing edge labeled with a given character. Given an NFA $\mathcal{A} = (Q, \delta, \Sigma, s)$, for a state $u \in Q$, and a character $a \in \Sigma$, we use the shortcut $\delta_{a}(u)$ for $\delta(u,a)$. For a set $S \subseteq Q$, we define $\delta_{a}(S) \coloneqq \bigcup_{u \in S}\delta_{a}(u)$. The set of finite strings over $\Sigma$, denoted by $\Sigma^{*}$, is the set of finite sequences of letters from $\Sigma$. We extend the transition function $\delta$ to the elements of $\Sigma^{*}$ in the following way: for $\alpha \in \Sigma^{*}$ and $u \in Q$ we define $\delta(u,\alpha)$ recursively as follows. If $\alpha = \varepsilon$ (i.e. $\alpha$ is the empty string) then $\delta(u,\alpha)  = \{u\}$. Otherwise, if $\alpha = \alpha'a$, with $\alpha' \in \Sigma^{*}$, $a \in \Sigma$, then $\delta(u,\alpha) = \bigcup_{v\in \delta(u,\alpha')}\delta(v,a)$. 


\begin{definition}[Strings reaching a state]\label{2:df:reac_str}
    Given an NFA $\mathcal{A}=(Q,\delta,\Sigma,s)$, the set of strings reaching a state $u \in Q$ is defined as $I_{u} = \{ \alpha \in \Sigma^{*} : u \in \delta(s,\alpha)\}$.
\end{definition}

We say that $I_{u}$ is the regular language recognized by state $u$, while the regular language recognized by $\mathcal{A}$ is the union of all the regular languages recognized by all $\mathcal{A}$'s states. Regarding the NFAs we treat, we make the following assumptions: (i) We assume that every state is reachable from the initial state. (ii) We assume that the initial state has no incoming edges. (iii) We do not require each state to have an outgoing edge for all possible labels. None of these assumptions are restrictive, since any NFA can be modified to satisfy these assumptions without changing its accepted language. Given an NFA $\mathcal{A} = $($Q$, $\delta$, $\Sigma$, $s$), and a state $u \in Q$, $u \neq s$, we denote with $\lambda(u)$ the set of the characters of $\Sigma$ that label the incoming edges of $u$. If $u = s$, we define $\lambda(u) = \{\#\}$, where $\# \notin \Sigma$.

\paragraph{Forward-stable partitions.}

Given a set $U$, a partition $\mathcal{U} = \{U_{i}\}_{i=1}^{k}$ of $U$ is a set of pairwise disjoint non-empty sets $\{U_{1}, ..., U_{k}\}$ whose union is $U$. We call the sets $U_{1}, ... ,U_{k}$ the parts of $\mathcal{U}$. Given two partitions $\mathcal{U}$ and $\mathcal{U}'$, we say that $\mathcal{U}'$ is a refinement of $\mathcal{U}$ if every part of $\mathcal{U}'$ is contained in a part of $\mathcal{U}$. Note that every partition is a refinement of itself. We use the concept of forward-stable partitions, see also the work of Alanko et al. \cite[Section 4.2]{alanko2021wheeler}.

\begin{definition}[Forward-Stability]
Given an NFA $\mathcal{A} = (Q,\delta,\Sigma,s)$ and two sets of states $S$, $T \subseteq Q$, we say that $S$ is \textit{forward-stable} with respect to $T$, if, for all $a \in \Sigma$, $S \subseteq \delta_{a}(T)$ or $S \cap \delta_{a}(T) = \emptyset$ holds. A partition $\mathcal{Q}$ of $\mathcal{A}$'s states is \textit{forward-stable} for $\mathcal{A}$, if, for any two parts $S$, $T \in \mathcal{Q}$, it holds that $S$ is forward-stable with respect to $T$.
\end{definition}

Given an NFA $\mathcal{A} = (Q,\delta,\Sigma,s)$ we say that $\mathcal{Q}$ is the \textit{coarsest} forward-stable partition for $\mathcal{A}$, if for every forward-stable partition $\mathcal{Q}'$ for $\mathcal{A}$, it holds that $\mathcal{Q}'$ is a refinement of $\mathcal{Q}$. It is easy to demonstrate that for each automaton $\mathcal{A}$ there exists a unique coarsest forward-stable partition (a proof can be found in the work of Becker et al. \cite[Appendix A]{becker2023sorting}). An example of the coarsest forward-stable partition for an NFA can be found in Figure \ref{fg:ex:2}. An interesting property about forward-stable partitions is the following: let $\mathcal{Q}$ be a forward-stable partition for an NFA $\mathcal{A}=(Q,\delta,\Sigma,s)$. If $u,v \in Q$ belong to the same part of $\mathcal{Q}$, then $u$ and $v$ are reached by the same set of strings; a proof of this property can be found in \cite[Lemma 4.7]{alanko2021wheeler}. However, the reverse does not necessarily hold: we may have states in different parts of a forward-stable partition $\mathcal{Q}$ that are reached by the same set of strings (even if $\mathcal{Q}$ is the coarsest forward-stable partition of $\mathcal{A}$). An example of this fact can be found in Figure 1 of the article by Becker et al. \cite{becker2023sorting}.  



\paragraph{Relations.}

Given a set $U$, a \textit{relation} $R \subseteq U \times U$ over $U$ is a set of ordered pairs of elements from $U$. For two elements $u$, $v$ from $U$, we write $uRv$ to denote that $(u,v) \in R$. A \textit{partial order} over a set $U$ is a relation that satisfies reflexivity, antisymmetry, and transitivity. If a partial order satisfies also connectedness (a relation over $U$ satisfies connectedness, if for each distinct $u, v \in U$, either $(u,v) \in R$ or $(v,u) \in R$ holds) then it is a \textit{total order}. A \textit{partial  preorder} over a set $U$ is a relation that satisfies reflexivity and transitivity, but not necessarily antisymmetry. A \textit{total preorder} is a partial preorder over a set $U$ that satisfies also connectedness. An \textit{equivalence relation} over a set $U$ is a relation that satisfies reflexivity, symmetry and transitivity. In this paper we use the symbol $\leq$ to denote both partial orders and partial preorders and the symbol $\sim$ for equivalence relations. Given an equivalence relation $\sim$ over a set $U$, we denote with $[u]_{\sim}$ the equivalence class of the element $u \in U$ with respect to $\sim$, i.e., $[u]_{\sim} \coloneqq \{v \in U : u \sim v\}$. We denote with $U/_{\sim}$ the partition of $U$ consisting of all equivalence classes $[u]_{\sim}$, for $u \in U$. We may not specify the set $U$ over which the relation is defined, if it is clear from the context. Given a partial order or a partial preorder $\leq$ over a set $U$, a set $L \subseteq U$ is an \textit{antichain} according to $\leq$ if for every $u, v \in L$, with $u \neq v$, pairs $(u,v)$ and $(v,u)$ do not belong to $\leq$. The \textit{width} of $\leq$, denoted by $\width(\leq)$, is the size of the largest antichain for $\leq$. Note that the width is equal to 1 if and only if $\leq$ is also a total order. A partial preorder $\leq$ over a set $U$ induces an equivalence relation $\sim$ over $U$ in the following way; For $u, v \in U$, we define $u \sim v$ if and only if $u \leq v$ and $v \leq u$. Moreover, a partial preorder $\leq$ over a set $U$ induces a partial order $\leq'$ over the set $U/_{\sim}$ (where $\sim$ is the equivalence induced by $\leq$) defined as $[u]_{\sim} \leq' [v]_{\sim}$ if and only if $u \leq v$. In addition, if $\leq$ is also a total preorder, then $\leq'$ is also a total order, since it is easy to observe that if $\leq$ satisfies connectedness, then also $\leq'$ satisfies connectedness. Given a partial order or a partial preorder $\leq$ over a set $U$, we define the symbol $<$ in the following way: for each $u, v \in U$, $u < v$ holds if and only if $u \leq v$ and $\neg(v \leq u)$. Finally, given an NFA $\mathcal{A} = (Q,\delta,\Sigma,s)$ and an equivalence relation $\sim$ over $Q$, we define the quotient automaton $\mathcal{A}/_{\sim}$ as follows:

\begin{definition}[Quotient automaton]\label{2:def:qa}
    Let $\mathcal{A} = (Q,\delta,\Sigma,s)$ be an NFA and $\sim$ an equivalence relation over $Q$. The \textit{quotient automaton} $\mathcal{A}/_{\sim}=(Q/_{\sim},\delta/_{\sim},\Sigma,s/_{\sim})$ of $\mathcal{A}$ is the NFA defined 
    by letting 
    $Q/_{\sim} := \{[u]_{\sim} : u \in Q\}$, 
    $\delta/_{\sim}([u]_{\sim},a) := \{[v]_{\sim} : (\exists v' \in [v]_{\sim}) (\exists u' \in [u]_{\sim}) (v' \in \delta(u', a))\}$, and 
    $s/_{\sim} := [s]_{\sim}$.
\end{definition}

An example of a quotient automaton of an NFA can be found in Figure \ref{fg:ex:2}.

\begin{figure}[ht]
\begin{center}
\resizebox{0.9\textwidth}{!}{
\begin{tikzpicture}
[dim/.style={minimum size=2.6em}, dots/.style={text centered}, 
scale=1, dim2/.style={minimum size=2.8, scale=0.85}, dots/.style={text centered}]
    \node[initial,state,initial where=below,dim] (0) at (0,-4) {$u_{0}$};
    \node[state,dim] (1) at (-2.3,-3.5) {$u_{1}$};
    \node[state,dim] (2) at (-1,-2.6) {$u_{2}$};
    \node[state,dim] (3) at (1,-2.6) {$u_{3}$};
    \node[state,dim] (4) at (2.3,-3.5) {$u_{4}$};
    \node[state,dim] (5) at (-2.2,-1.5) {$u_{5}$};
    \node[state,dim] (6) at (2.2,-1.5) {$u_{6}$};


    \path [->] (0) edge [bend left = 30] node [below] {$a$} (1);
    \path [->] (0) edge [bend left = 30] node [left] {$a$} (2);
    \path [->] (1) edge [loop left] node [left] {$a$} (1);
    \path [->] (2) edge [loop left] node [left] {$a$} (2);
    \path [->] (0) edge [bend right = 30] node [right] {$a$} (3);
    \path [->] (0) edge [bend right = 30] node [below] {$a$} (4);
    \path [->] (2) edge [bend left = 20] node [above] {$b$} (5);
    \path [->] (2) edge [bend left = 10] node [above] {$b$} (6);
    \path [->] (4) edge [bend right = 10] node [right] {$b$} (6);
    \path [->] (3) edge [bend right = 10] node [above] {$b$} (5);
    \path [->] (5) edge [bend left = 38] node [above] {$b$} (6);
    \path [->] (6) edge [bend right = 19] node [below] {$b$} (5);

    \begin{scope}[xshift=6.3cm]
    \node[initial,state,initial where=below,dim2] (0) at (0,-4) {$[u_{0}]_{\sim}$};
    \node[state,dim2] (1) at (-1.7,-2.5) {$[u_{1}]_{\sim}$};
    \node[state,dim2] (2) at (1.7,-2.5) {$[u_{3}]_{\sim}$};
    \node[state,dim2] (3) at (0,-1.0) {$[u_{5}]_{\sim}$};


    \path [->] (0) edge [bend left = 20] node [below] {$a$} (1);
    \path [->] (0) edge [bend right = 20] node [below] {$a$} (2);
    \path [->] (1) edge [loop left] node [left] {$a$} (1);
    \path [->] (1) edge [bend left = 20] node [above] {$b$} (3);
    \path [->] (2) edge [bend right = 20] node [above] {$b$} (3);
    \path [->] (3) edge [loop below] node [below] {$b$} (3);
    \end{scope}
\end{tikzpicture}
}
\end{center}
\caption{An NFA $\mathcal{A}=(Q,\delta,\Sigma,s)$ on the left. On the right, the corresponding quotient automaton $\mathcal{A}/_{\sim_{FS}}=(Q/_{\sim_{FS}},\delta/_{\sim_{FS}},\Sigma,s_{\sim_{FS}})$ of $\mathcal{A}$ for the coarsest forward-stable partition $Q/_{\sim_{FS}} = \{ \{u_{0}\}, \{u_{1}, u_{2}\}, \{u_{3}, u_{4}\}, \{u_{5}, u_{6}\} \}$.}
\label{fg:ex:2}



%
%

\label{4:fg:ex2}
\end{figure}

\paragraph{Wheeler and Quasi-Wheeler NFAs.}

The notion of Wheeler NFAs was first introduced by Gagie et al. \cite[Definition 1]{gagie2017wheeler}. Wheeler NFAs are a particular class of NFAs that can be endowed with a particular total order of their states that allows them to be efficiently compressed and indexed. From now on, we assume that given an NFA $\mathcal{A}=(Q,\delta,\Sigma,s)$ there exists a total order $\leq$ among the elements of the set $\Sigma \cup \{\#\}$, where $\# < a$ for each $a \in \Sigma$. In this paper we deliberately overload the notation regarding the symbol $\leq$, i.e., we will use the same symbol for orders on integers, the alphabet $\Sigma \cup \{ \# \}$ and states $Q$ of automata. It will still always be clear from the context which order we refer to.

\begin{definition}[Wheeler NFAs]
Let $\mathcal{A} = (Q,\delta,\Sigma,s)$ be an NFA. A Wheeler order $\leq$ of $\mathcal{A}$ is a total order over $Q$ such that state $s$ precedes all states in $Q \setminus \{s\}$, and, for any pair $u \in \delta_{a}(u')$ and $v \in \delta_{a'}(v')$:
\begin{enumerate}
    \item If $a < a'$, then $u < v$.
    \item If $a = a'$ and $u' < v'$, then $u \leq v$. 
\end{enumerate}
We say that $\mathcal{A}$ is a \textit{Wheeler NFA}, if there exists a \textit{Wheeler order} $\leq$ of $\mathcal{A}$

\end{definition}

In other words, A Wheeler order is a total order over $Q$ that sorts the states of $Q$ according to the set strings that reach them (see Definition \ref{2:df:reac_str}). The possibly largest drawback of Wheeler NFAs is that, recognizing whether a given NFA is Wheeler or not is an NP-complete problem~\cite{gibney2019hardness}. For this reason, Becker et al.~\cite[Definition 8]{becker2023sorting} proposed a relaxed version of the problem, introducing the notion of quasi-Wheeler NFAs. At this point, given an NFA $\mathcal{A}=(Q,\delta,\Sigma,s)$, we define as $\sim_{FS}$ the equivalence relation over $Q$, such that for each $u,v \in Q$, $u \sim_{FS} v$ holds if and only if $u$ and $v$ belong to the same part of the coarsest forward-stable partition for $\mathcal{A}$. Consequently, partition $Q/_{\sim_{FS}}$ is the coarsest forward-stable partition for $\mathcal{A}$

\begin{definition}[Quasi-Wheeler NFAs]\label{2:def:qw}
Let $\mathcal{A} = (Q,\delta,\Sigma,s)$ be an NFA. A Wheeler preorder $\leq$ of $\mathcal{A}$ is a total preorder over $Q$ such that:
\begin{enumerate}
    \item $\sim_{FS}$ is the equivalence relation induced by $\leq$.
    \item The quotient automaton $\mathcal{A}/_{\sim_{FS}}$ is a Wheeler NFA, and the total order $\leq'$ induced by $\leq$ is a Wheeler order for $\mathcal{A}/_{\sim_{FS}}$.
\end{enumerate}
We say that $\mathcal{A}$ is a quasi-Wheeler NFA, if there is a Wheeler preorder $\leq$ of $\mathcal{A}$.
\end{definition}

Due to the fact that states in the same part of a forward-stable partition are reached by the same set of strings, it is easy to observe that the automata $\mathcal{A}$ and $\mathcal{A}/{\sim_{FS}}$ of Definition \ref{2:def:qw} recognize the same language. Thus, from an indexing perspective the NFA $\mathcal{A}/_{\sim_{FS}}$ is equally useful as $\mathcal{A}$: an index for $\mathcal{A}/_{\sim_{FS}}$ is also an index for $\mathcal{A}$. Moreover, every Wheeler NFA is also a quasi-Wheeler NFA, and there exist NFAs that are quasi-Wheeler but not Wheeler. This implies that the class of quasi-Wheeler NFAs is strictly larger than the class of Wheeler NFAs (for a formal proof see Lemma 9 and Figure 1 of the article of Becker et al.~\cite{becker2023sorting}). Probably, the most important difference between Wheeler NFAs and quasi-Wheeler NFAs is that the latter can be recognized in polynomial time. In fact, Becker et al. \cite{becker2023sorting} proposed an algorithm that, given in input an NFA $\mathcal{A}$, is able to recognize if $\mathcal{A}$ is quasi-Wheeler in $O(\lvert \delta \rvert \log \lvert Q \rvert)$ time and, if this is the case, compute a Wheeler preorder for $\mathcal A$ in the same running time. This is achieved by extending the partition refinement framework of Paige and Tarjan~\cite{paige1987three} to compute the coarsest forward-stable partition for $\mathcal{A}$ and, at the same time, a total order $\leq$ over the parts of that partition (and thus over $\mathcal{A}$'s states), in such a way that if $\mathcal{A}$ is quasi-Wheeler, then $\leq$ is a Wheeler preorder of $\mathcal{A}$.

\paragraph{$p$-Sortable NFAs.}
Despite the fact that Wheeler NFAs can be indexed and compressed almost optimally, a major limitation regarding Wheeler NFAs is that the class of Wheeler NFAs is very limited. In fact, Wheeler languages, i.e., languages that are recognized by a Wheeler NFA, are star-free and closed only under intersection~\cite{alanko2021wheeler}. Moreover, also the amount of ``nondeterminism'' within a Wheeler NFA is bounded, since every Wheeler NFA admits an equivalent Wheeler DFA  of linear size \cite{alanko2020regular} (while in the general case there is an exponential blow-up of the number of states). For these reasons, Cotumaccio and Prezza \cite[Definition 3.1]{cotumaccio2021indexing} have extended the notion of Wheeler orders by introducing the concept of co-lex orders. From here on, given an NFA $A = (Q,\delta,\Sigma,s)$, and two states $u,v \in Q$, we say that $\lambda(u) \leq \lambda(v)$ holds if and only if for each $a \in \lambda(u)$ and for each $a' \in \lambda(v)$, it holds that $a \leq a'$.

\begin{definition}[Co-lex orders]\label{2:def:col_ord}
    Let $\mathcal{A} = (Q,\delta,\Sigma,s)$ be an NFA. A co-lex order of $\mathcal{A}$ is a partial order $\leq$ over $Q$ that satisfies the following two axioms:
    \begin{enumerate}
        \item For every $u, v \in Q$, if $u < v$, then $\lambda(u) \leq \lambda(v)$.
        \item For every pair $u \in \delta_{a}(u')$ and $v \in \delta_{a}(v')$, if $u < v$, then $u' \leq v'$.
    \end{enumerate}
\end{definition}

A main difference between Wheeler orders and co-lex orders is that every NFA $\mathcal{A}=(Q,\delta,\Sigma,s)$ admits a co-lex order; in fact, the relation $\leq\,\coloneqq \{(v,v) : v \in Q\}$ is a co-lex order of $\mathcal{A}$. Given an NFA $\mathcal{A}$ and a co-lex order $\leq$ of $\mathcal{A}$, we say that $\leq$ is the \textit{maximum} co-lex order of $\mathcal{A}$ if $\leq$ is equal to the union of all co-lex orders of $\mathcal{A}$. In general, an NFA does not always admit a maximum co-lex order. Cotumaccio and Prezza also define the notion of co-lexicographic width of an NFA \cite[Definition 3.3]{cotumaccio2021indexing}:

\begin{definition}[Co-lexicographic width]\label{3:df:co_lex_wdt}
    Let $\mathcal{A}$ be an NFA.
    \begin{enumerate}
        \item The NFA $\mathcal{A}$ is \textit{p-sortable} if there exists a co-lex order $\leq$ of $\mathcal{A}$ of width $p$. 
        \item The \textit{co-lexicographic width} $\bar{p}$ of $\mathcal{A}$ is the smallest $p$ for which $\mathcal{A}$ is p-sortable.
    \end{enumerate}
\end{definition}

From Definition \ref{3:df:co_lex_wdt}, it is possible to observe that if an NFA $\mathcal{A}$ is a Wheeler NFA, then the co-lexicographic width of $\mathcal{A}$ is equal to 1, since a Wheeler order is a particular type of co-lex order that is also a total order. As Cotumaccio and Prezza shows, the co-lexicographic width $\bar{p}$ of an NFA $\mathcal{A}$ measures the space and time complexity with which $\mathcal{A}$ can be encoded and indexed. However, the problem of determining the co-lexicographic width of a given NFA is known to be NP-hard. This follows from NP-hardness of Wheelerness.

\paragraph{Indexable Partial Preorders.}

At this point, we have seen that some particular NFAs $\mathcal{A}=(Q,\delta,\Sigma,s)$ (i.e. Wheeler NFAs) can be endowed with a particular total order over $Q$ (i.e. a Wheeler order) in order to build efficient indexes on top of them. Despite the fact that the problem of determining whether or not a general NFA $\mathcal{A}$ admits a Wheeler order is NP-complete, we have seen that it is possible to compute in polynomial time a particular total preorder $\leq$ over $Q$ such that the quotient automaton $\mathcal{A}/_{\sim_{FS}}=(Q/_{\sim_{FS}},\delta/_{\sim_{FS}},\Sigma,s/_{\sim_{FS}})$ (where $\sim_{FS}$ is the equivalence relation induced by $\leq$) has the following properties: $(i)$ $\mathcal{A}$ and $\mathcal{A}/_{\sim_{FS}}$ accept the same language, $(ii)$ If $\mathcal{A}$ is Wheeler, then also $\mathcal{A}/_{\sim_{FS}}$ is Wheeler, $(iii)$ if $\mathcal{A}/_{\sim_{FS}}$ is Wheeler, then the total order $\leq'$ over $Q/_{\sim_{FS}}$ induced by $\leq$ is a Wheeler order for $\mathcal{A}/_{\sim_{FS}}$. Now it is natural to wonder whether or not these reasonings can be generalized also to arbitrary NFAs; to this end, we introduce the notion of indexable partial preorders.

\begin{definition}[Indexable partial preorders]\label{2:def:indpp}
Let $\mathcal{A} = (Q,\delta,\Sigma,s)$ be an NFA of co-lexicographic width $\bar{p}$, and let $\leq$ be a partial preorder over $Q$. Consider $\mathcal{A}/_{\sim}=(Q/_{\sim},\delta/_{\sim},\Sigma,s/_{\sim})$ the quotient automaton defined by the equivalence relation $\sim$ induced by $\leq$. We say that $\leq$ is an \textit{indexable partial preorder} for $\mathcal{A}$ if the following requirements are satisfied:
\begin{enumerate}
    \item $\mathcal{A}$ and $\mathcal{A}/_{\sim}$ accept the same language.

    \item $\mathcal{A}/_{\sim}$ has co-lexicographic width $\bar{q}$, with $\bar{q} \leq \bar{p}$.

    \item The partial order $\leq'$ over $Q/_{\sim}$ induced by $\leq$ is a co-lex order of $A/_{\sim}$ of width $\bar{q}$.
\end{enumerate}
\end{definition}

It is clear that a Wheeler preorder of an automaton $\mathcal{A}$ is also an indexable partial preorder for $\mathcal{A}$. However, in the literature there exists another example of indexable partial preorders, namely the maximum co-lex relations of Cotumaccio~\cite{cotumaccio2022graphs}. Cotumaccio defined the concept of \textit{co-lex relations} of NFAs~\cite[Definition 1]{cotumaccio2022graphs}. Each NFA $\mathcal{A}$ admits a \textit{maximum} co-lex relation, that is the co-lex relation of $\mathcal{A}$ that is equal to the union of all co-lex relations of $\mathcal{A}$. Finally, Cotumaccio demonstrated that the maximum co-lex relation of an NFA $\mathcal{A}=(Q,\delta,\Sigma,s)$ can be computed in $O(\lvert \delta \rvert^{2})$ time and that it is a partial preorder over $Q$ that satisfies the requirements of Definition \ref{2:def:indpp}. In this work, we propose a new class of indexable partial preorders, which we term \textit{coarsest forward-stable co-lex orders}. Then we demonstrate that the width of our class of indexable partial preorders is always smaller than or equal to the width of the maximum co-lex relation of Cotumaccio, and in some cases, even linearly smaller with respect to the number of the automaton's states.

\section{Coarsest Forward-Stable Co-Lex Orders}\label{sec:CSF}

\paragraph{Co-Lex Relations.}
In this section we define a new category of indexable partial preorders (Definition \ref{2:def:indpp}). We start with defining co-lex relations~\cite[Definition 1]{cotumaccio2022graphs}.

\begin{definition}[Co-lex relations]\label{2:def:col_rel}
    Let $\mathcal{A}=(Q,\delta,\Sigma,s)$ be an NFA. A co-lex relation of $\mathcal{A}$ is a reflexive relation $R$ over $Q$ that satisfies:
    \begin{enumerate}
        \item For every $u, v \in Q$, with $u \neq v$, if $(u,v) \in R$, then $\lambda(u) \leq \lambda(v)$.
        \item For every pair $u \in \delta_{a}(u')$ and $v \in \delta_{a}(v')$, with $u \neq v$, if $(u,v) \in R$, then $(u',v') \in R$.
    \end{enumerate}
\end{definition}

It follows that if a co-lex relation $R$ on an NFA $\mathcal{A}=(Q,\delta,\Sigma,s)$ is also a partial order over $Q$, then $R$ is also a co-lex order of $\mathcal{A}$. Furthermore, every co-lex order of $\mathcal{A}$ is also a co-lex relation of $\mathcal{A}$. Given an NFA $\mathcal{A}$ and a co-lex relation $R$ of $\mathcal{A}$, we say that $R$ is the \textit{maximum} co-lex relation of $\mathcal{A}$ if $R$ is equal to the union of all co-lex relations of $\mathcal{A}$. Although, an NFA $\mathcal{A}$ does not always admit a maximum co-lex order, it has been proved that every NFA admits a maximum co-lex relation, denoted hereafter as $\leq_{R}$, and that $\leq_{R}$ satisfies always transitivity, i.e., $\leq_{R}$ is always a partial preorder~\cite[Lemma 5]{cotumaccio2022graphs}. From here on, given an NFA $\mathcal{A}=(Q,\delta,\Sigma,s)$ we will denote by $\mathcal{A}/_{\sim_{R}}=(Q/_{\sim_{R}},\delta/_{\sim_{R}},\Sigma,s/_{\sim_{R}})$ the quotient automaton of $\mathcal{A}$ defined by the maximum co-lex relation $\leq_{R}$, where $\sim_{R}$ is the equivalence relation over $Q$ induced by $\leq_{R}$. The next lemma shows an important property that characterizes the maximum co-lex relation of an NFA.

\begin{lemma}\cite[Corollary 16]{cotumaccio2022graphs}\label{4:lm:max_co_lex}
Let $\leq_{R}$ be the maximum co-lex relation of an automaton $\mathcal{A}=(Q,\delta,\Sigma,s)$, and let $\mathcal{A}/_{\sim_{R}}=(Q/_{\sim_{R}},\delta/_{\sim_{R}},\Sigma,s/_{\sim_{R}})$ be the quotient automaton of $\mathcal{A}$ defined by $\leq_{R}$. Then the partial order $\leq$ over $Q/_{\sim_{R}}$ induced by $\leq_{R}$ is the maximum co-lex order of $\mathcal{A}/_{\sim_{R}}$.
\end{lemma}

Lemma \ref{4:lm:max_co_lex} is important for our purposes, because it demonstrates that the quotient automaton $\mathcal{A}/_{\sim_{R}}$ always admits a maximum co-lex order. Moreover, in Lemma~\ref{dummy:lemma:1}~(1) we observe that if an NFA $\mathcal{A}$ admits a maximum co-lex order $\leq$, then $\width(\leq)$ is smaller than or equal to the width of any co-lex order of $\mathcal{A}$, since $\leq$ must be a superset of every co-lex order of $\mathcal{A}$. This in turn demonstrates that the co-lexicographic width of $\mathcal{A}$ must be equal to the width of $\leq$. In the next lemma, we demonstrate that the partition $Q/_{\sim_{R}}$ is always a forward-stable partition for $\mathcal{A}$, though we will see later that $Q/_{\sim_{R}}$ may not necessarily be the coarsest forward-stable partition for $\mathcal{A}$.

\begin{restatable}[]{lemma}{lemmaii}\label{4:lm:mx_rl_pr_fw}
    Let $\mathcal{A}=(Q,\delta,\Sigma,s)$ be an NFA and $\leq_{R}$ its maximum co-lex relation. Consider the partition $Q/_{\sim_{R}}$, where $\sim_{R}$ is the equivalence relation over $Q$ induced by $\leq_{R}$. Then $Q/_{\sim_{R}}$ is a forward-stable partition for $\mathcal{A}$.
\end{restatable}

\begin{proof}
    Let $S, T$ be two arbitrary parts of partition $Q/_{\sim_{R}}$. We want to demonstrate that for each $a \in \Sigma$ either $S \subseteq \delta_{a}(T)$ or $S \cap \delta_{a}(T) = \emptyset$ holds. Let $a \in \Sigma$ and suppose for contradiction that there exists $u \in S \cap \delta_{a}(T)$ and $v \in S \setminus \delta_{a}(T)$. Then there exists $u' \in T$ such that $u \in \delta_{a}(u')$ and $S = [u]_{\sim_{R}}$, $T = [u']_{\sim_{R}}$. Since $v$ and $u$ belong to the same part $S$ of $Q/_{\sim_{R}}$, it must hold that $v \leq_{R} u$ and $u \leq_{R} v$. It implies that, due to Axiom 1 of Definition \ref{2:def:col_rel}, $\lambda(v) \leq \lambda(u)$ and $\lambda(u) \leq \lambda(v)$, which can simultaneously hold if and only if $\lambda(v) = \lambda(u) = \{a\}$. It follows that there exists $v' \in Q$ such that $v \in \delta_{a}(v')$. Due to Axiom 2 of Definition \ref{2:def:col_rel}, since $v \leq_{R} u$ we have that $v' \leq_{R} u'$, moreover, since $u \leq_{R} v$ we have that $u' \leq_{R} v'$ and so $u' \sim_{R} v'$. Therefore $v' \in [u']_{\sim_{R}}$ and so $v' \in T$. Therefore, it follows that $v \in \delta_{a}(T)$, a contradiction.
\end{proof}

\paragraph{Quotient Automaton $\mathcal{A}/_{\sim_{FS}}$.}

Given an NFA $\mathcal{A}$, in the following part of this article, we will show some properties of the quotient automaton $\mathcal{A}/_{\sim_{FS}}=(Q/_{\sim_{FS}},\delta/_{\sim_{FS}},\Sigma,s/_{\sim_{FS}})$, where $Q/_{\sim_{FS}}$ is the coarsest forward-stable partition for $\mathcal{A}$. In particular, in the next lemma we show that $\mathcal{A}/_{\sim_{FS}}$ always admits a maximum co-lex order.

\begin{restatable}[]{lemma}{lemmaiii}\label{4:lm:FS_max_colex}
For an NFA $\mathcal{A}=(Q,\delta,\Sigma,s)$ let $\mathcal{A}/_{\sim_{FS}}=(Q/_{\sim_{FS}},\delta/_{\sim_{FS}},\Sigma,s/_{\sim_{FS}})$ be the quotient automaton of $\mathcal{A}$ such that $Q/_{\sim_{FS}}$ is the coarsest forward-stable partition of $\mathcal{A}$ and let $\leq_{R}'$ be the maximum co-lex relation of $\mathcal{A}/_{\sim_{FS}}$. Then $\leq_{R}'$ is the maximum co-lex order of $\mathcal{A}/_{\sim_{FS}}$.   
\end{restatable}

\begin{proof}
    As the set of co-lex relations is a superset of the set of co-lex orders, it suffices to show that $\le_R'$ is a co-lex order. By Definition $\le_R'$ is reflexive and by maximality $\le_R'$ is transitive~\cite[Lemma 5]{cotumaccio2022graphs}. Hence, it remains to show antisymmetry.
    
    Therefore, let us suppose for the purpose of contradiction that $\le_R'$ is not antisymmetric, and let us consider the equivalence relation $\sim_{R}'$ over $Q/_{\sim_{FS}}$ induced by $\leq_{R}'$. Moreover, let us consider the partition $\mathcal{P}$ of $Q$ such that states $u,v \in Q$ belong to the same part of $\mathcal{P}$ if and only if $[u]_{\sim_{FS}} \sim_{R}' [v]_{\sim_{FS}}$ holds. Clearly, partition $Q/_{\sim_{FS}}$ is a refinement of $\mathcal{P}$ since every part of $Q/_{\sim_{FS}}$ is contained in a part of $\mathcal{P}$. However, as $\leq_{R}'$ is not antisymmetric, it is easy to observe that the reverse does not hold, and therefore $\mathcal{P}$ is not a refinement of $Q/_{\sim_{FS}}$. 
    
    At this point, we want to prove that $\mathcal{P}$ is a forward-stable partition for $\mathcal{A}$. Let $S, T$ be two arbitrary parts of $\mathcal{P}$. We have to prove that $\forall a \in \Sigma$, $S \subseteq \delta_{a}(T)$ or $S \cap \delta_{a}(T) = \emptyset$ hold. Let $a \in \Sigma$ and suppose for contradiction that there exists $u \in \delta_{a}(T) \cap S$, and $v \in S \setminus \delta_{a}(T)$, so we know that there exists $u' \in T$, such that $u \in \delta_{a}(u')$. Clearly, $v \notin [u]_{\sim_{FS}}$, since if $v \in [u]_{\sim_{FS}}$, then, due to the fact that $Q/_{\sim_{FS}}$ is forward-stable, it must exist a state $v' \in [u']_{\sim_{FS}}$, such that $v \in \delta_{a}(v')$. Therefore we have $[v]_{\sim_{FS}} \neq [u]_{\sim_{FS}}$. Since $[v]_{\sim_{FS}} \sim_{R}' [u]_{\sim_{FS}}$, and due to Axiom 1 of Definition \ref{2:def:col_rel}, we have that $\lambda([v]_{\sim_{FS}})=\lambda([u]_{\sim_{FS}})=\{a\}$. It follows that there exists a state $[v']_{\sim_{FS}} \in Q/_{\sim_{FS}}$ such that $[v]_{\sim_{FS}} \in \delta/_{\sim_{FS}}([v']_{\sim_{FS}}, a)$. Due to Axiom 2 of Definition \ref{2:def:col_rel}, since $[u]_{\sim_{FS}} \leq_{R}' [v]_{\sim_{FS}}$, it follows that $[u']_{\sim_{FS}} \leq_{R}' [v']_{\sim_{FS}}$, and since $[v]_{\sim_{FS}} \leq_{R}' [u]_{\sim_{FS}}$, it follows that $[v']_{\sim_{FS}} \leq_{R}' [u']_{\sim_{FS}}$; therefore $[v']_{\sim_{FS}} \sim_{R}' [u']_{\sim_{FS}}$ holds.
    Due to the fact that $Q/_{\sim_{FS}}$ is forward-stable, there exists $v'' \in [v']_{\sim_{FS}}$ such that $v \in \delta_{a}(v'')$, moreover, since $[u']_{\sim_{FS}} \sim_{R}' [v'']_{\sim_{FS}}$, it follows that $v'' \in T$, which in turn proves that $v \in \delta_{a'}(T)$. Therefore $\mathcal{P}$ is a forward-stable partition for $\mathcal{A}$; however, $\mathcal{P}$ is not a refinement of $Q/_{\sim_{FS}}$, this contradicts the hypothesis that $Q/_{\sim_{FS}}$ is the coarsest forward-stable partition of $\mathcal{A}$. It follows that it was absurd to assume that $\leq_{R}'$ was not antisymmetric.
\end{proof}

\paragraph{CFS orders.} 
Lemma~\ref{4:lm:FS_max_colex} not only proves that $\mathcal{A}/_{\sim_{FS}}$ always admits a maximum co-lex order, but also that the maximum co-lex order of $\mathcal{A}/_{\sim_{FS}}$ is equal to the maximum co-lex relation of $\mathcal{A}/_{\sim_{FS}}$. 
At this point we have all the ingredients to introduce \textit{coarsest forward-stable co-lex (CFS) orders}.

\begin{definition}[Coarsest forward-stable co-lex (CFS) order]\label{4:def:FS_colex}
    Let $\mathcal{A}=(Q,\delta,\Sigma,s)$ be an NFA, and let $\mathcal{A}/_{\sim_{FS}}=(Q/_{\sim_{FS}},\delta/_{\sim_{FS}},\Sigma,s/_{\sim_{FS}})$ be the quotient automaton of $\mathcal{A}$ such that $Q/_{\sim_{FS}}$ is the coarsest forward-stable partition for $\mathcal{A}$. Let $\leq$ be the maximum co-lex order of $\mathcal{A}/_{\sim_{FS}}$, then we say that the \textit{coarsest forward-stable co-lex (CFS) order} of $\mathcal{A}$, denoted as $\leq_{FS}$, is the (unique) relation over $Q$ such that, for each $u,v \in Q$, $u \leq_{FS} v$ holds if and only if $[u]_{\sim_{FS}} \leq [v]_{\sim_{FS}}$.
\end{definition}

We will now demonstrate some properties of the CFS order of an NFA. First of all, we want to demonstrate that the CFS order is a partial preorder over the automaton's states.

\begin{restatable}[]{lemma}{lemmaiv}\label{4:lm:fs_pp}
    The CFS order $\leq_{FS}$ of any NFA is a partial preorder over its states.
\end{restatable}

\begin{proof}
    We have to prove that $\leq_{FS}$ satisfies reflexivity and transitivity. Let us consider the maximum co-lex order $\leq$ of the quotient automaton $\mathcal{A}/_{\sim_{FS}}=(Q/_{\sim_{FS}},\delta/_{\sim_{FS}},\Sigma,s/_{\sim_{FS}})$. By definition of co-lex order, $\leq$ is a partial order, therefore it satisfies reflexivity and transitivity.
    Let us consider an arbitrary state $u \in Q$. By Definition \ref{4:def:FS_colex}, $u \leq_{FS} u$ if and only if $[u]_{\sim_{FS}} \leq [u]_{\sim_{FS}}$; however $[u]_{\sim_{FS}} \leq [u]_{\sim_{FS}}$ holds since $\leq$ is reflexive, therefore also $u \leq_{FS} u$ must hold. It follows that $\leq_{FS}$ satisfies reflexivity.
    Let us consider three arbitrary states $u,v,z \in Q$ such that $u \leq_{FS} v$ and $v \leq_{FS} z$. We have to prove that $u \leq_{FS} z$. Since $u \leq_{FS} v$ we have that $[u]_{\sim_{FS}} \leq [v]_{\sim_{FS}}$, and since $v \leq_{FS} z$ we have that $[v]_{\sim_{FS}} \leq [z]_{\sim_{FS}}$. Moreover, due to the fact that $\leq$ is transitive, we have also that $[u]_{\sim_{FS}} \leq [z]_{\sim_{FS}}$, and consequently $u \leq_{FS} z$. It follows that $\leq_{FS}$ satisfies transitivity.
\end{proof}

The following remarks follow from $\leq_{FS}$ being a partial preorder.

\begin{restatable}[]{remark}{remarki}\label{4:ob:db_obs}
    For an NFA $\mathcal{A}=(Q,\delta,\Sigma,s)$ let $\mathcal{A}/_{\sim_{FS}}=(Q/_{\sim_{FS}},\delta/_{\sim_{FS}},\Sigma,s/_{\sim_{FS}})$ be the quotient automaton with $Q/_{\sim_{FS}}$ being the coarsest forward-stable partition for $\mathcal{A}$. Moreover, let $\leq_{FS}$ and $\leq$ be the CFS order of $\mathcal{A}$ and the maximum co-lex order of $\mathcal{A}/_{\sim_{FS}}$, respectively. Then (1)~$\sim_{FS}$ is equal to the equivalence relation induced by $\leq_{FS}$, and (2)~$\leq$ is the partial order induced by $\leq_{FS}$.
\end{restatable}

\begin{proof}
    For~(1), observe that $\leq$ is a partial order and thus satisfies antisymmetry. Therefore, given two states $u,v \in Q$, $u \leq_{FS} v$ and $v \leq_{FS} u$ can both hold if and only if $u$ and $v$ belong to the same part of the coarsest forward-stable partition $Q/_{\sim_{FS}}$ for $\mathcal{A}$, i.e., if and only if $u \sim_{FS} v$.
    For~(2), note that $\leq_{FS}$ is a partial preorder over $Q$, while $\leq$ is a partial order over $Q/_{\sim_{FS}}$. Moreover, $u \leq_{FS} v$ holds if and only if $[u]_{\sim_{FS}} \leq [v]_{\sim_{FS}}$ holds.
\end{proof}

\section{Relation between CFS Order and Max Co-Lex Relation}
\label{sec:CSF CR}

At this point, we have introduced the maximum co-lex relation $\leq_{R}$ and the CFS order $\leq_{FS}$ of an NFA. In this section we study the relation between $\leq_{R}$ and $\leq_{FS}$. In particular, we can prove that $\leq_{FS}$ is always a superset of $\leq_{R}$, while the opposite does not necessarily hold. In order to do that, we need first to introduce the notion of \textit{preceding pairs} from the article of Cotumaccio~\cite{cotumaccio2022graphs}.

\begin{definition}[Preceding pairs]\cite[Definition 6]{cotumaccio2022graphs}.\label{4:def:prec_pair}
    Let $\mathcal{A}=(Q,\delta,\Sigma,s)$ be an NFA and let $(u',v'),(u,v) \in Q \times Q$ be pairs of distinct states. The pair $(u',v')$ precedes $(u,v)$ in $\mathcal{A}$ if there are $u_{1},...,u_{r}, v_{1},...,v_{r} \in Q$ with $r \geq 1$ and $a_{1}, ..., a_{r-1} \in \Sigma$ such that:
    \begin{enumerate}
        \item $u_{1} = u'$ and $v_{1} = v'$,
        \item $u_{r} = u$ and $v_{r} = v$,
        \item for all $i = 1, \ldots , r-1$, $u_{i} \neq v_{i}$, and 
        \item for all $i = 1, \ldots , r-1$, $u_{i+1} \in \delta_{a_{i}}(u_{i})$ and $v_{i+1} \in \delta_{a_{i}}(v_{i})$.
    \end{enumerate}
\end{definition}

Note that if $u,v \in Q$ are distinct states, then pair $(u,v)$ trivially precedes itself. Next lemma shows an important relation between the maximum co-lex relation and the preceding pairs of an NFA.

\begin{lemma}\cite[Lemma 7]{cotumaccio2022graphs}\label{4:lm:pr_pairs}
Let $\mathcal{A}=(Q,\delta,\Sigma,s)$ be an NFA, let $\leq_{R}$ be the maximum co-lex relation of $\mathcal{A}$, and let $u,v \in Q$ be distinct states. Then, $u \leq_{R} v$ holds if and only if for all pairs $(u',v')$ preceding $(u,v)$ in $\mathcal{A}$ it holds that $\lambda(u') \leq \lambda(v')$.
\end{lemma}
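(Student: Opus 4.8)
The plan is to show that $\leq_{R}$ coincides with the relation $R$ defined directly by the preceding-pair condition, from which the biconditional is immediate. Concretely, I would define $R \subseteq Q \times Q$ by putting $(u,v) \in R$ whenever $u = v$, or $u \neq v$ and every pair $(u',v')$ preceding $(u,v)$ satisfies $\lambda(u') \leq \lambda(v')$. The goal then becomes proving $\leq_{R} = R$, which I would establish by two inclusions.

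For $\leq_{R} \subseteq R$ (the ``only if'' direction), I would take distinct $u,v$ with $u \leq_{R} v$ and an arbitrary preceding pair $(u',v')$, witnessed by sequences $u_{1},\dots,u_{r}$ and $v_{1},\dots,v_{r}$ as in Definition~\ref{4:def:prec_pair}. Starting from $(u_{r},v_{r}) = (u,v) \in \leq_{R}$, I would walk the sequence backwards: at each step the transitions $u_{i+1} \in \delta_{a_{i}}(u_{i})$ and $v_{i+1} \in \delta_{a_{i}}(v_{i})$ together with $u_{i+1} \neq v_{i+1}$ allow me to apply Axiom~2 of Definition~\ref{2:def:col_rel} and conclude $(u_{i},v_{i}) \in \leq_{R}$. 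The distinctness $u_{i+1} \neq v_{i+1}$ required to invoke Axiom~2 holds for $i+1 \leq r-1$ by condition~(3) of the definition of preceding pairs and for $i+1 = r$ by $u \neq v$. Iterating down to $i = 1$ gives $(u',v') \in \leq_{R}$, and as $u' \neq v'$, Axiom~1 yields $\lambda(u') \leq \lambda(v')$; hence $(u,v) \in R$.

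For $R \subseteq \leq_{R}$ (the ``if'' direction), since $\leq_{R}$ is the union of all co-lex relations, it suffices to verify that $R$ is itself a co-lex relation; maximality then forces $R \subseteq \leq_{R}$. Reflexivity holds by construction. For Axiom~1, I would use that any distinct pair trivially precedes itself (take $r = 1$), so $(u,v) \in R$ with $u \neq v$ directly gives $\lambda(u) \leq \lambda(v)$. For Axiom~2, I would take $x \in \delta_{a}(x')$, $y \in \delta_{a}(y')$ with $x \neq y$ and $(x,y) \in R$, and show $(x',y') \in R$: the case $x' = y'$ is handled by reflexivity, and when $x' \neq y'$ the key step is that every pair $(x'',y'')$ preceding $(x',y')$ also precedes $(x,y)$, obtained by appending the step $x \in \delta_{a}(x')$, $y \in \delta_{a}(y')$ to a witnessing sequence for $(x',y')$; condition~(3) survives because the newly interior pair $(x',y')$ is distinct by assumption. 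Since $(x,y) \in R$, this gives $\lambda(x'') \leq \lambda(y'')$ for all such $(x'',y'')$, which is exactly the condition placing $(x',y')$ in $R$.

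Combining the two inclusions gives $\leq_{R} = R$, and membership in $R$ for distinct $u,v$ is precisely the asserted biconditional. I expect the main obstacle to be the Axiom~2 verification in the second direction—the sequence-extension argument together with the careful bookkeeping of the distinctness condition~(3)—since this is the one place where the combinatorial structure of preceding pairs interacts nontrivially with the backward-closure axiom; everything else reduces to direct applications of the two axioms.
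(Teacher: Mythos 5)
Your proof is correct, but there is nothing in the paper to compare it against: the paper does not prove this statement at all, it imports it verbatim as Lemma~7 of Cotumaccio~\cite{cotumaccio2022graphs} and uses it as a black box (e.g., in the proof that $\leq_{FS}$ is a superset of $\leq_{R}$). Your blind reconstruction is a sound, self-contained proof of the characterization. The backward walk for the ``only if'' direction is correctly bookkept: the pair to which you apply Axiom~2 of Definition~\ref{2:def:col_rel} is distinct at every step --- by condition~(3) of Definition~\ref{4:def:prec_pair} at interior indices and by $u \neq v$ at the last index --- and Axiom~1 applies at the end because $u' \neq v'$ (condition~(3) again when $r \geq 2$, the hypothesis $u \neq v$ when $r=1$). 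For the converse, reducing to the claim that the preceding-pairs relation $R$ is itself a co-lex relation, so that maximality of $\leq_{R}$ (whose existence the paper imports from \cite[Lemma 5]{cotumaccio2022graphs}) forces $R \,\subseteq\, \leq_{R}$, is exactly the right move; your verification of Axiom~2 by appending one transition to a witnessing sequence, with distinctness of $(x',y')$ guaranteeing condition~(3) at the newly interior position and the case $x' = y'$ absorbed by reflexivity, is correct. This two-inclusion structure (the preceding-pairs relation is a co-lex relation, and every co-lex relation is contained in it) is the natural route to the characterization and is presumably how Cotumaccio's original proof proceeds; relative to the paper at hand, your argument has the added value of making the manuscript self-contained at a point where it currently rests on an external citation.
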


Before proving that $\leq_{FS}$ is a superset of $\leq_{R}$, we introduce another lemma regarding preceding pairs in quotient automata.

\begin{restatable}[]{lemma}{lemmavi}\label{4:lemma:prec_pair_qa}
    Let $\mathcal{A}=(Q,\delta,\Sigma,s)$ be an NFA, and let $\mathcal{A}/_{\sim}=(Q/_{\sim},\delta/_{\sim},\Sigma,s/_{\sim})$ be a quotient automaton of $\mathcal{A}$ such that $Q/_{\sim}$ is a forward-stable partition for $\mathcal{A}$. Consider two pairs $([u]_{\sim},[v]_{\sim}), ([u']_{\sim},[v']_{\sim})$, with $[u]_{\sim},[v]_{\sim},[u']_{\sim},[v']_{\sim} \in Q/_{\sim}$, such that $([u']_{\sim},[v']_{\sim})$ precedes $([u]_{\sim},[v]_{\sim})$ in $\mathcal{A}/_{\sim}$. Then, there exist $u'',v'' \in Q$, with $u'' \in [u']_{\sim}, v'' \in [v']_{\sim}$, such that $(u'',v'')$ precedes $(u,v)$ in $\mathcal{A}$.
\end{restatable}
\begin{proof}
        Since $([u']_{\sim},[v']_{\sim})$ precedes $([u]_{\sim},[v]_{\sim})$ in $\mathcal{A}/_{\sim}$, we know that there exists a sequence $[u_{1}]_{\sim},...,[u_{r}]_{\sim},[v_{1}]_{\sim},...,[v_{r}]_{\sim}$ that satisfies the requirements of Definition \ref{4:def:prec_pair}. The proof proceeds by induction over the parameter $r$. Firstly, let us prove the statement for $r=1$. We know that $u,v$ are distinct states, because $[u]_{\sim}, [v]_{\sim}$ are distinct, and therefore pair $(u,v)$ trivially precedes itself in $\mathcal{A}$. Now let us prove the statement for a general $r$. Let us consider the sequence $[u_{2}]_{\sim},...,[u_{r}]_{\sim},[v_{2}]_{\sim},...,[v_{r}]_{\sim}$, due to the inductive hypothesis, we know that there exist $u^{*} \in [u_{2}]_{\sim}$ and $v^{*} \in [v_{2}]_{\sim}$, such that $(u^{*},v^{*})$ precedes $(u,v)$ in $\mathcal{A}$. Since $[u_{2}]_{\sim} \in \delta/_{\sim}([u_{1}]_{\sim}, a_{1})$ and $[v_{2}]_{\sim} \in \delta/_{\sim}([v_{1}], a_{1})$, and due to the definition of quotient automaton, we know that there exist $\bar{u}_{1},\bar{u}_{2},\bar{v}_{1},\bar{v}_{2} \in Q$ such that: $\bar{u}_{1} \in [u_{1}]_{\sim}$, $\bar{u}_{2} \in [u_{2}]_{\sim}$, $\bar{v}_{1} \in [v_{1}]_{\sim}$, $\bar{v}_{2} \in [v_{2}]_{\sim}$, $\bar{u_{2}} \in \delta(\bar{u_{1}}, a_{1})$, $\bar{v_{2}} \in \delta(\bar{v_{1}}, a_{1})$. Moreover, since $Q/_{\sim}$ is forward-stable, there must exist also $u'' \in [u_{1}]_{\sim}$ and $v'' \in [v_{1}]_{\sim}$, such that $u^{*} \in \delta(u'',a_{1})$ and $v^{*} \in \delta(v'',a_{1})$ (note that $u'' \neq v''$ because $[u_{1}]_{\sim} \neq [v_{1}]_{\sim}$). Moreover, since $(u^{*},v^{*})$ precedes $(u,v)$ in $\mathcal{A}$, then also $(u'',v'')$ must precede $(u,v)$ in $\mathcal{A}$. It follows that the lemma holds.
\end{proof}

At this point we are ready to prove that $\leq_{FS}$ is a superset of $\leq_{R}$.

\begin{restatable}[]{lemma}{lemmavii}\label{4:lm:fs_rf_cr}
    For an NFA $\mathcal{A} = (Q,\delta,\Sigma,s)$ let $\leq_{R}$ be the maximum co-lex relation of $\mathcal{A}$ and let $\leq_{FS}$ be the CFS order of $\mathcal{A}$. Then $\leq_{FS}$ is a superset of $\leq_{R}$.
\end{restatable}

\begin{proof}
    We suppose for contradiction that $\leq_{FS}$ is not a superset of $\leq_{R}$. Then, there is $(u,v) \in \; \leq_{R}$ with $(u,v) \notin \; \leq_{FS}$. Let $\mathcal{A}/_{\sim_{FS}}=(Q/_{\sim_{FS}},\delta/_{\sim_{FS}},\Sigma,s/_{\sim_{FS}})$ be the quotient automaton of $\mathcal{A}$ such that $Q/_{\sim_{FS}}$ is the coarsest forward-stable partition for $\mathcal{A}$, and let $\leq$ be the maximum co-lex order of $\mathcal{A}/_{\sim_{FS}}$ (which exists due to Lemma \ref{4:lm:FS_max_colex}). By Definition \ref{4:def:FS_colex}, $(u,v) \notin\; \leq_{FS}$ if and only if $([u]_{\sim_{FS}}, [v]_{\sim_{FS}}) \notin\; \leq$. In addition, $[u]_{\sim_{FS}} \neq [v]_{\sim_{FS}}$, since $\leq$ satisfies reflexivity. Due to Lemma 7 in the article by Cotumaccio~\cite{cotumaccio2022graphs}
    and because $\leq$ is also the maximum co-lex relation of $\mathcal{A}/_{\sim_{FS}}$ (see Lemma \ref{4:lm:FS_max_colex}), there exist $[u']_{\sim_{FS}}, [v']_{\sim_{FS}} \in Q/_{\sim_{FS}}$ such that the pair $([u']_{\sim_{FS}}, [v']_{\sim_{FS}})$ precedes $([u]_{\sim_{FS}}, [v]_{\sim_{FS}})$ in $\mathcal{A}/_{\sim_{FS}}$, and $\lambda([u']_{\sim_{FS}}) \leq \lambda([v']_{\sim_{FS}})$ does not hold. Due to Lemma \ref{4:lemma:prec_pair_qa}, there exist $u'' \in [u']_{\sim_{FS}}$, $v'' \in [v']_{\sim_{FS}}$ such that $(u'',v'')$ precedes $(u,v)$ in $\mathcal{A}$. Moreover, since $u'' \in [u']_{\sim_{FS}}$, we have that $\lambda(u'') = \lambda([u']_{\sim_{FS}})$, and analogously  $\lambda(v'') = \lambda([v']_{\sim_{FS}})$. Hence, if $\lambda([u']_{\sim_{FS}}) \leq \lambda([v']_{\sim_{FS}})$ does not hold, then also $\lambda(u'') \leq \lambda(v'')$ does not hold. However, again using Lemma 7 in the article by Cotumaccio~\cite{cotumaccio2022graphs}, this contradicts the assumption $(u,v) \in\;\leq_{R}$. Hence, $\leq_{FS}$ is a superset of $\leq_{R}$.
\end{proof}

The following lemma then allows us to conclude that the width of the CFS order is always at most the width of the maximum co-lex relation.

\begin{restatable}[]{lemma}{dummylemmai}\label{dummy:lemma:1}
    The following two statements hold:
    (1) Let $\leq$ and $\leq'$ be two partial orders or partial preorders over the same set $U$. If $\leq$ is a superset of $\leq'$, then $\width(\leq) \leq \width(\leq')$. (2) Let $\leq$ be a partial preorder over a set $U$ and let $\leq'$ be the partial order induced by $\leq$ over $U/_{\sim}$. Then, $\width(\leq) = \width(\leq')$.
\end{restatable}

\begin{proof}
    (1) Since $\leq$ is a superset of $\leq'$, we know that for each $u,v \in U$, if $(u,v) \notin\, \leq$, then $(u,v) \notin\,\leq'$. It follows that every antichain according to $\leq$ is also an antichain according to $\leq'$. (2) Let us denote with $p$ and $p'$ the integers $\width(\leq)$ and $\width(\leq')$, respectively. Let us consider $L$, an arbitrary antichain according to $\leq$, and let us define the set $L' \coloneqq \{ [u]_{\sim} : u \in L \}$; clearly $\lvert L \rvert = \lvert L' \rvert$. It is easy to observe that $L'$ is an antichain according to $\leq'$, this proves that $p \leq p'$. Now, let us consider $L'$, an arbitrary antichain according to $\leq'$, and let us consider a possible set $L$, in such a way that $L$ contains a representative of each equivalence class $[u]_{\sim}$, such that $[u]_{\sim} \in L'$. Clearly, $L$ is an antichain according to $\leq$, and $\lvert L' \rvert = \lvert L \rvert$. This proves that $p' \leq p$. It follows that $p = p'$.
\end{proof}

We are now ready to prove that CFS orders are a class of indexable partial preorders, see Definition \ref{2:def:indpp}. Moreover, we will show that the quotient automaton $\mathcal{A}/_{\sim_{FS}}$ has never more states than $\mathcal{A}/_{\sim_{R}}$, and that the co-lexicographic width of the former is never larger then the co-lexicographic width of the latter.

\begin{restatable}[]{theorem}{mainthi}\label{main:th:1}
    Let $\mathcal{A}=(Q,\delta,\Sigma,s)$ be an NFA, and let $\leq_{R}$ and $\leq_{FS}$ be the maximum co-lex relation and the CFS order of $\mathcal{A}$, respectively. Finally, let $\mathcal{A}/_{\sim_{R}}=(Q/_{\sim_{R}},\delta/_{\sim_{R}},\Sigma,s/_{\sim_{R}})$ and $\mathcal{A}/_{\sim_{FS}}=(Q/_{\sim_{FS}},\delta/_{\sim_{FS}},\Sigma,s/_{\sim_{FS}})$ be the quotient automata of $\mathcal{A}$ defined by $\leq_{R}$ and $\leq_{FS}$, respectively. Then:
    \begin{enumerate}
        \item $\leq_{FS}$ is an indexable partial preorder for $\mathcal{A}$.
        \item The co-lexicographic width of $\mathcal{A}/_{\sim_{FS}}$ is smaller than or equal to the co-lexicographic width of $\mathcal{A}/_{\sim_{R}}$.
        \item The number of parts in partition $Q/_{\sim_{FS}}$ is smaller than or equal to the number of parts in partition $Q/_{\sim_{R}}$. 
    \end{enumerate}
\end{restatable}
\begin{proof}
\begin{enumerate}
    \item By Lemma \ref{4:lm:fs_pp}, $\leq_{FS}$ is a partial preorder over $Q$, thus it remains to prove the three properties in Definition~\ref{2:def:indpp}. 
    (a)~Clearly, $\mathcal{A}$ and $\mathcal{A}/_{\sim_{FS}}$ accept the same language, since states in the same part of a forward-stable partition are reached by the same set of strings~\cite[Lemma 4.7]{alanko2021wheeler}.
    (b)~Let $\bar{p}$ be the co-lexicographic width of $\mathcal{A}$. We have to prove that the co-lexicographic width of $\mathcal{A}/_{\sim_{FS}}$ is smaller than or equal to $\bar{p}$. 
    Let $\leq$ be the maximum co-lex order of $\mathcal{A}/_{\sim_{FS}}$. Remark~\ref{4:ob:db_obs} yields that $\leq$ is the partial order induced by $\leq_{FS}$ and hence Lemma~\ref{dummy:lemma:1}~(2) implies that $\width(\le) = \width(\le_{FS})$. Then, Lemma~\ref{4:lm:fs_rf_cr} yields that $\leq_{FS}$ is a superset of $\le_{R}$, the maximum co-lex relation of $\mathcal{A}$, and thus Lemma~\ref{dummy:lemma:1}~(1) implies that $\width(\le_{FS}) \le \width(\le_R)$. Now let $\leq'$ be a co-lex order of $\mathcal{A}$ of width $\bar{p}$. Then, $\leq'$ is also a co-lex relation and thus $\leq_{R}$ is a superset of $\leq'$. Applying Lemma~\ref{dummy:lemma:1}~(1) again yields $\width(\leq_{R})\le \width(\leq')=\bar p$. Thus altogether
    $
        \width(\le) 
        \le \bar p
    $.
    (c)~Let $\bar{q}$ be the co-lexicographic width of $\mathcal{A}/_{\sim_{FS}}$, and let $\leq$ be the maximum co-lex order of $\mathcal{A}/_{\sim_{FS}}$. Remark~\ref{4:ob:db_obs} yields that $\leq$ is the partial order induced by $\leq_{FS}$. Hence, we have to prove that $\width(\leq)=\bar{q}$. It is clear that $\width(\leq)\ge \bar q$. For the other direction, let $\le'$ be a co-lex order of  $\mathcal{A}/_{\sim_{FS}}$ with $\width(\leq')= \bar q$. Since $\leq$ is a superset of $\leq'$, Lemma~\ref{dummy:lemma:1}~(2) yields $\width(\leq)\le\width(\le')=\bar{q}$.

    \item Remark~\ref{4:ob:db_obs} yields that the maximum co-lex order of $\mathcal{A}/_{\sim_{FS}}$ is the partial order induced by $\leq_{FS}$. Hence the co-lexicographic width of $\mathcal{A}/_{\sim_{FS}}$ is equal to $\width(\leq_{FS})$ by Lemma \ref{dummy:lemma:1}~(2). Lemma \ref{4:lm:fs_rf_cr} states that $\le_{FS}$ is a superset of $\le_R$ and thus $\width(\leq_{FS}) \le \width(\leq_R)$ according to Lemma~\ref{dummy:lemma:1}~(1). From Lemma~\ref{4:lm:max_co_lex}, we conclude that the partial order $\le$ induced by $\leq_{R}$ is the maximum co-lex order of $\mathcal A/_{\sim_R}$. Again using Lemma~\ref{dummy:lemma:1}~(2) implies $\width(\leq_{R})=\width(\le)$.
    
    \item By Lemma \ref{4:lm:mx_rl_pr_fw}, $Q/_{\sim_{R}}$ is forward-stable for $\mathcal{A}$, while $Q/_{\sim_{FS}}$ is the coarsest forward-stable partition for $\mathcal{A}$. Thus $Q/_{\sim_{R}}$ is a refinement of $Q/_{\sim_{FS}}$ and the number of parts of $Q/_{\sim_{FS}}$ is at most the number of parts of $Q/_{\sim_{R}}$.\qedhere
\end{enumerate}
\end{proof}

\begin{figure}[ht!]
\begin{center}
\resizebox{0.9\textwidth}{!}{
\begin{tikzpicture}
[dim/.style={minimum size=2.5em}, dots/.style={text centered}, 
scale=0.75, dim2/.style={minimum size=1.4em}, dim3/.style={minimum size=2.2em},
dim4/.style={minimum size=2.8, scale=0.75}]
    \begin{scope}
        \node[dots] (-1) at (0.5, 0.0) {(a)};
        
        \node[initial,state,initial where=below,dim] (1) at (2.3,-4) {$u_{1}$};
        \node[state,dim] (2) at (2.0,-2.0) {$u_{2}$};
        \node[state,dim] (3) at (4.0,-2.0) {$u_{3}$};
        \node[state,dim] (4) at (3.7, -4) {$u_{4}$};
        \node[state,dim] (5) at (1.7,0) {$u_{5}$};
        \node[state,dim] (6) at (4.3,0) {$u_{n}$};
        \node[dots]  (0) at (3.0,0) {...};
    
        \path [->] (1) edge node[left] {$a$} (2);
        \path [->] (1) edge node[above] {$b$} (3);
        \path [->] (2) edge node[left] {$a$} (5);
        \path [->] (2) edge node[right] {$a$} (6);
        \path [->] (3) edge node[left] {$a$} (5);
        \path [->] (3) edge node[right] {$a$} (6);
        \path [->] (3) edge node[right] {$b$} (4);
    \end{scope}

    \begin{scope}[xshift=4cm]
        \node[dots] (-1) at (2.0, 0.0) {(b)};
        \node[dim2] (1) at (3,-5.2) {$u_{1}$};
        \node[dim2] (2) at (3,-3.8) {$u_{2}$};
        \node[dim2] (5) at (2.0,-2.4) {$u_{5}$};
        \node[dim2] (6) at (4.0,-2.4) {$u_{n}$};
        \node[dim2] (3) at (3,-1.0) {$u_{3}$};
        \node[dim2] (4) at (3,0.4) {$u_{4}$};
        \node[dots]  (0) at (3.0,-2.4) {...};
        \node[dots] (-2) at (1.6, 0.0) {};
    
        \path [->] (1) edge node[left] {} (2);
        \path [->] (2) edge node[left] {} (5);
        \path [->] (2) edge node[left] {} (6);
        \path [->] (6) edge node[left] {} (3);
        \path [->] (5) edge node[left] {} (3);
        \path [->] (3) edge node[right] {} (4);
    \end{scope}

    \begin{scope}[xshift=8cm]
        \node[dots] (-1) at (1.5, 0.0) {(c)};
        \node[initial,state,initial where=below,dim4] (1) at (2.3,-4) {$[u_{1}]_{\sim}$};
        \node[state, dim4] (2) at (2.2,-2) {$[u_{2}]_{\sim}$};
        \node[state, dim4] (3) at (3.8,-2) {$[u_{3}]_{\sim}$};
        \node[state, dim4] (4) at (3.7,-4) {$[u_{4}]_{\sim}$};
        \node[state, dim4] (5) at (3,0) {$[u_{n}]_{\sim}$};
        \node[dots] (-2) at (1.2, 0.0) {};
    
        \path [->] (1) edge node[left] {$a$} (2);
        \path [->] (1) edge node[right] {$b$} (3);
        \path [->] (2) edge node[left] {$a$} (5);
        \path [->] (3) edge node[right] {$a$} (5);
        \path [->] (3) edge node[right] {$b$} (4);
    \end{scope}
    \begin{scope}[xshift=12cm]
        \node[dots] (-1) at (1.5, 0.0) {(d)};
        \node[dim3] (1) at (3,-5.2) {$[u_{1}]_{\sim}$};
        \node[dim3] (2) at (3,-3.8) {$[u_{2}]_{\sim}$};
        \node[dim3] (5) at (3,-2.4) {$[u_{n}]_{\sim}$};
        \node[dim3] (3) at (3,-1.0) {$[u_{3}]_{\sim}$};
        \node[dim3] (4) at (3.0,0.4) {$[u_{4}]_{\sim}$};
        \node[dots] (-2) at (1.7, 0.0) {};
    
        \path [->] (1) edge node[left] {} (2);
        \path [->] (2) edge node[left] {} (5);
        \path [->] (5) edge node[left] {} (3);
        \path [->] (3) edge node[left] {} (4);
    \end{scope}
\end{tikzpicture}
}
\end{center}
\caption{(a) An NFA $\mathcal{A}=(Q,\delta,\Sigma,s)$ with $\Sigma = \{a,b\}$,
$Q=\{u_{1},...,u_{n}\}$, with $n > 4$, where $u_{1}=s$ is the initial state, and $\delta$ is s.t.\ $u_{2} \in \delta_{a}(u_{1})$, $u_{3} \in \delta_{b}(u_{1})$, $u_{4}\ \in \delta_{b}(u_{3})$ and for each $4 < i \leq n$, $u_{i} \in \delta_{a}(u_{2})$, $u_{i} \in \delta_{a}(u_{3})$. We observe that $\mathcal{A}/_{\sim_{R}}$ is equal to $\mathcal{A}$ itself. Note that the example is not trivial, as states $u_{2}$ and $u_{3}$ may not be merged without changing the language of $\mathcal{A}$ (due to state $u_{4}$). (b) The Hasse diagram of the maximum co-lex order of $\mathcal{A}/_{\sim_{R}}$, where $\{u_{5},...,u_{n}\}$ forms a largest antichain, consequently the co-lexicographic width of $\mathcal{A}/_{\sim_{R}}$ is equal to $n-4$. (c) The automaton $\mathcal{A}/_{\sim_{FS}}$ consisting of five states, where for each $4 < i \leq n$, $u_{i} \in [u_{n}]_{\sim_{FS}}$. (d) The Hasse diagram of the maximum co-lex order of $\mathcal{A}/_{\sim_{FS}}$. Note that this total order is also a Wheeler order of $\mathcal{A}/_{\sim_{FS}}$, i.e., the co-lexicographic width of $\mathcal{A}/_{\sim_{FS}}$ is equal to 1.}
\label{4:fg:ex1}
\end{figure}

We now prove another important result regarding the automata $\mathcal{A}/_{\sim_{FS}}$ and $\mathcal{A}/_{\sim_{R}}$. Namely, that the number of states and the co-lexicographic width of $\mathcal{A}/_{\sim_{FS}}$ can be in some cases arbitrary smaller with respect to those of $\mathcal{A}/_{\sim_{R}}$.

\begin{restatable}[]{theorem}{mainthii}\label{main:th:2}
    There exists a class of NFAs such that, for each NFA $\mathcal{A}=(Q,\delta,\Sigma,s)$ in this class; the quotient automaton $A/_{\sim_{R}}$, defined by the maximum co-lex relation of $\mathcal{A}$, has a number of states and a co-lexicographic width equal to $O(\lvert Q \rvert)$, while the quotient automaton $A/_{\sim_{FS}}$, defined by the CFS order of $\mathcal{A}$, has a number of states and a co-lexicographic width equal to $O(1)$.
\end{restatable}
\begin{proof}
    We describe the structure of this class of NFAs in Figure \ref{4:fg:ex1}. In particular, in this figure we can observe that the automaton $\mathcal{A}/_{\sim_{FS}}$ has always co-lexicographic width equal to 1 and it is a Wheeler NFA.
\end{proof}

Finally, we show that the CFS order of an automaton $\mathcal{A}=(Q,\delta,\Sigma,s)$  can be computed in $O(\lvert \delta \rvert^{2})$ time.

\begin{corollary}\label{unique:cor}
    Let $\mathcal{A}=(Q,\delta,\Sigma,s)$ be an NFA. We can compute the CFS order $\leq_{FS}$ of $\mathcal{A}$ in $O(\lvert \delta \rvert^{2})$ time.
\end{corollary}

\begin{proof}
    It is possible to use the partition refinement framework of Paige and Tarjian \cite{paige1987three} to compute the coarsest forward-stable partition of $\mathcal{A}$ and consequently the quotient automaton $\mathcal{A}/_{\sim_{FS}}=(Q/_{\sim_{FS}},\delta/_{\sim_{FS}},\Sigma,s_{\sim_{FS}})$ in $O(\lvert \delta \rvert \log \lvert Q \rvert)$ time. After that, we can compute the maximum co-lex relation $\leq_{R}'$ of $\mathcal{A}/_{\sim_{FS}}$ using Cotumaccio's algorithm \cite[Theorem 8]{cotumaccio2022graphs} in $O(\lvert \delta \rvert^{2})$ time, where by Lemma \ref{4:lm:FS_max_colex} we know that $\leq_{R}'$ is also the maximum co-lex order of $\mathcal{A}/_{\sim_{FS}}$. Once we have computed $\leq_{R}'$, we can also reconstruct $\leq_{FS}$.
\end{proof}


\section{Conclusion}\label{sec:conc}

After recapitulating all existing orders for building indices on NFAs using a BWT-based strategy, we have introduced \textit{coarsest forward-stable co-lex (CFS) orders}. Such orders are partial preorders over the states of an NFA. After proving the existence and uniqueness of a CFS order for each NFA, we compared these novel orders with the previously known ones that (i)~can be used for indexing an NFA using a BWT-based strategy and (ii)~are guaranteed to exist, namely co-lex orders~\cite{cotumaccio2021indexing} and maximum co-lex relations~\cite{cotumaccio2022graphs}. Notably, in Theorem~\ref{main:th:1} we proved that the width of our novel CFS order is never larger than the width of the previously mentioned orders. Furthermore, in Theorem~\ref{main:th:2} we showed that there exist NFAs for which the width of the forward-stable co-lex order is $O(1)$, while the width of any co-lex order or the maximum co-lex relation is $O(\lvert Q \rvert)$, where $Q$ is the set of states of the NFA. Finally in Corollary~\ref{unique:cor} we proved that the forward-stable co-lex order of any NFA can be computed in polynomial time, more precisely in quadratic time in the number of transitions of the NFA.

\begin{credits}
\subsubsection{\ackname} We would like to thank Nicola Cotumaccio for insightful discussions on the topic of this paper. Ruben Becker, Sung-Hwan Kim, Nicola Prezza, and Carlo Tosoni are funded by the European Union (ERC, REGINDEX, 101039208). Views and opinions expressed are however those of the author(s) only and do not necessarily reflect those of the European Union or the European Research Council. Neither the European Union nor the granting authority can be held responsible for them. 

\end{credits}

\bibliographystyle{splncs04}

\appendix

\section{Relationships between Orders in Figure~\ref{fg:ex:3}}
\label{appendix: relationships}


In this section we prove some properties regarding the different relations presented in Figure~\ref{fg:ex:3}, namely \textit{Wheeler orders}, \textit{Wheeler preorders}, \textit{(maximum) co-lex orders}, \textit{(maximum) co-lex relations}, and \textit{coarsest forward-stable co-lex orders}.

\begin{proposition}
    Let $\mathcal{A}$ be an NFA. Then, the following statements hold true: 
    \begin{enumerate}
        \item If $\mathcal{A}$ admits a maximum co-lex order $\leq$, then the coarsest forward-stable co-lex order of $\mathcal{A}$ may not be equal to $\leq$.
        \item The maximum co-lex relation of $\mathcal{A}$ may not be equal to the coarsest forward-stable co-lex order of $\mathcal{A}$.
        \item If $\mathcal{A}$ admits a Wheeler preorder $\leq$ and a Wheeler order $\leq'$, then $\leq'$ may be not equal to $\leq$.
    \end{enumerate}
\end{proposition}

\begin{proof}
    \begin{enumerate}
        \item An example is shown in Figure \ref{4:fg:ex1}.
        \item An example is shown in Figure \ref{4:fg:ex1}.
        \item Let us consider the NFA $\mathcal{A}=(Q,\delta,\Sigma,s)$ with the following structure: $Q = \{u_{1}, u_{2}, u_{3}\}$, $\Sigma = \{a\}$, $s = u_{1}$ and $u_{2}, u_{3} \in \delta_{a}(u_{1})$. It is possible to check that $\leq\, \coloneqq \{(u_{1}, u_{1}), (u_{1}, u_{2}), (u_{1}, u_{3}), (u_{2}, u_{2}), (u_{2}, u_{3}), (u_{3}, u_{3}) \}$ is a Wheeler order of $\mathcal{A}$. However, 
        \[
            \leq'\,\coloneqq \{(u_{1}, u_{1}), (u_{1}, u_{2}), (u_{1}, u_{3}), (u_{2}, u_{2}), (u_{2}, u_{3}), (u_{3},u_{2}), (u_{3}, u_{3}) \}
        \]
        is a Wheeler preorder of $\mathcal{A}$, and $\leq'\;\neq\;\leq$. \qedhere
    \end{enumerate}
\end{proof}

The following proposition characterizes the case when the maximum co-lex relation and the coarsest forward-stable co-lex order are identical.

\begin{proposition} \label{thm:relationship}
    Let $\mathcal{A}$ be an NFA. Let $\le_{R}$ and $\le_{FS}$ be the maximum co-lex relation and the coarsest forward-stable co-lex order of $\mathcal{A}$, respectively. Then, the following statements hold true:
    \begin{enumerate}
        \item \label{thm:relationship:r:fs} $\le_R$ is equal to $\le_{FS}$ if and only if $Q/_{\sim_{R}}=Q/_{\sim_{FS}}$.
        \item \label{thm:relationship:o} Assume that $\mathcal{A}$ admits a maximum co-lex order $\leq$, then,
        \begin{enumerate}
            \item \label{thm:relationship:o:r} $\le_{R}$ is equal to $\leq$ if and only if $|Q|=|Q/_{\sim R}|$.
            \item \label{thm:relationship:o:fs }$\le_{FS}$ is equal to $\leq$ if and only if $|Q|=|Q/_{\sim FS}|$. \qedhere
        \end{enumerate}
    \end{enumerate}
\end{proposition}
\begin{proof}
    \begin{enumerate}
        \item $(\Rightarrow)$: Note that if, for all $u,v\in Q$, $u\le_{R}~v$ if and only if $u\le_{FS} v$, then it also holds that, for all $u,v\in Q$, $u\sim_{R} v$ if and only if $u\sim_{FS} v$. Consequently, we obtain $Q/_{\sim_{R}}= Q/_{\sim_{FS}}$.

        $(\Leftarrow)$: Assume $Q/_{\sim_{R}}=Q/_{\sim_{FS}}$. We need to prove that, for every $u,v\in Q$, $u\le_{R} v$ holds if and only if $u\le_{FS}v$ holds. The forward direction immediately follows from Lemma~\ref{4:lm:fs_rf_cr}, thus we shall prove that $u\le_{FS}v$ implies $u\le_{R} v$. Suppose for the purpose of contradiction that this is not the case, i.e., $u\le_{FS} v$, but $u\le_{R} v$ does not hold. Hence, $[u]_{\sim_{R}}\neq [v]_{\sim_{R}}$. Since $Q/_{\sim_{R}}=Q/_{\sim_{FS}}$, we have $[u]_{\sim_{FS}}=[u]_{\sim_{R}}\neq [v]_{\sim_{R}}=[v]_{\sim_{FS}}$.
        Now consider the maximum co-lex orders $\le'$ and $\le''$ for $\mathcal{A}/_{\sim_{R}}$ and $\mathcal{A}/_{\sim_{FS}}$, respectively. Their existence is guaranteed by Lemmas~\ref{4:lm:max_co_lex}~and~\ref{4:lm:FS_max_colex}.
        Moreovewr, note that $Q/_{\sim_{R}}=Q/_{\sim_{FS}}$ implies that the quotient automata $\mathcal{A}/_{\sim_{R}}$ and $\mathcal{A}/_{\sim_{FS}}$ are identical, and thus $\le'$ is equal to $\le''$.
        From $[u]_{\sim_{FS}}\neq [v]_{\sim_{FS}}$ and $u\le_{FS} v$, it follows that $[u]_{\sim_{FS}}\le'' [v]_{\sim_{FS}}$ by definition. On the other hand, it does not hold that $[u]_{\sim_{R}}\le' [v]_{\sim_{R}}$ as $u\le_{R}v$ does not hold, a contradiction.
        
        \item \begin{enumerate}
            \item $(\Rightarrow)$: Since $\le$ is antisymmetric, no distinct $u,v\in Q$ such that $u\le v$ and $v\le u$ exist. Then, the fact that, for all $u,v\in Q$, $u\le v$ holds if and only if $u\le_{R} v$, implies that no distinct $u,v\in Q$ can satisfy both $u\le_R v$ and $v\le_R u$. It follows that $Q=Q/_{\sim_{R}}$ and therefore $|Q|=|Q/_{\sim_{R}}|$.

            $(\Leftarrow)$: Since $|Q|=|Q/_{\sim_{R}}|$, for every distinct $u,v\in Q$ it holds that at most one of $u\le_R v$ and $v\le_R u$ hold. Hence, $\le_R$, the maximum co-lex relation, is antisymmetric and thus it is a maximum co-lex order. Since the maximum co-lex order of an automaton is unique (if it exists), $\le_R$ is equal to $\le$.

            \item $(\Rightarrow)$: By definition $\leq$ satisfies antisymmetry. Thus, if $\leq$ is equal to $\leq_{FS}$, then also $\leq_{FS}$ satisfies antisymmetry. Thus, let us consider the equivalence relation $\sim_{FS}$ induced by $\leq_{FS}$, $u \sim_{FS} v$ can hold if and only if $u = v$, which proves that $Q = Q/_{\sim_{FS}}$, and consequently $\lvert Q \rvert = \lvert Q/_{\sim_{FS}} \rvert$.
            
            $(\Leftarrow)$: We start from the fact that $Q/_{\sim_{R}}$ is a partition of $Q$ thus it holds that $|Q/_{\sim_{R}}|\le |Q|$. Moreover, if $|Q|=|Q/_{\sim_{FS}}|$, then by Theorem~\ref{main:th:1}, we have $|Q|=|Q/_{\sim_{FS}}|\le |Q/_{\sim_{R}}|$. Hence, we obtain $|Q/_{\sim_{R}}|=|Q/_{\sim_{FS}}|=|Q|$. Note that this means that both $Q/_{\sim_{R}}$ and $Q/_{\sim_{FS}}$ contain only singletons; in other words, we obtain $Q/_{\sim_{R}}=Q/_{\sim_{FS}}=Q$. Then the claim follows from (\ref{thm:relationship:r:fs}) and (\ref{thm:relationship:o:r}).\qedhere
        \end{enumerate}
    \end{enumerate}
\end{proof}

We now present a much stronger result: If an NFA $\mathcal{A}$ admits a maximum co-lex order $\le$, then $\le$ must be equal to the maximum co-lex relation $\le_R$ of $\mathcal{A}$. It is worth noting that this relies on the assumption of the unique initial state with no in-coming transitions. In particular, notice that Cotumaccio~\cite{cotumaccio2022graphs} showed that a general labeled graph without this assumption can admit both the maximum co-lex order and the maximum co-lex relation while the two orders are not equal. Here we show that once the unique initial state is assumed, the two orders must be equal once an NFA admits them. To prove this, we introduce the notion of \textit{distance from the source}. Given an NFA $\mathcal{A}=(Q,\delta,\Sigma,s)$, the distance of state $u$ from the source is defined as $\phi(u)=\argmin_{\alpha \in I_u} |\alpha|$.
It is clear that if $\phi(u) = r > 1$ for $u \in Q$, then there must exist $u' \in Q$ and $a \in \Sigma$ such that $u \in \delta_{a}(u')$ and $\phi(u') = r-1$.

\begin{lemma}\label{appb:max_order:lm1}
    Let $\mathcal{A}=(Q,\delta,\Sigma,s)$ be an NFA, let $\leq_{R}$ be the maximum co-lex relation of $\mathcal{A}$, and let $\sim_{R}$ be the equivalence relation induced by $\leq_{R}$. Then, for two distinct states $u, v \in Q$, it holds that $u \sim_{R} v$ implies $\phi(u)=\phi(v)$.
\end{lemma}

\begin{proof}
    Due to Lemma 7 in the article by Cotumaccio~\cite{cotumaccio2022graphs}, we know that $u \sim_{R} v$ holds if and only if, for each preceding pair $(u',v')$ of $(u,v)$, $\lambda(u') = \lambda(v') = \{a\}$, for some $a \in \Sigma$. Let us show the contrapositive of the statement, i.e., suppose that $\phi(u)\neq \phi(v)$ and we will show that this implies the existence of a preceding pair $(u',v')$ of $(u,v)$ with $\lambda(u') \neq \lambda(v')$. Let us consider $\phi(u) = r \neq r' = \phi(v)$, and let us assume, w.l.o.g., $r<r'$, the other case is symmetric. Note that, for each $0 \leq i < r$, we have $u_{r-i} \neq v_{r'-i}$ as $\phi(u_{r-i}) \neq \phi(v_{r'-i})$. The proof proceeds by induction over $r$. If $r=1$, then $(u_{r},v_{r})$ is a preceding pair of $(u,v)$ and $\lambda(s) = \lambda(u_{r}) \neq \lambda(v_{r})$, since $s$ is the only state of $\mathcal{A}$ such that $\lambda(s) = \{\#\}$. Now assume $r > 1$. Then either $\lambda(u_{r}) \neq \lambda(v_{r})$ giving us the required pair of preceding states $\lambda(u_{r}) = \lambda(v_{r}) = \{a\}$ for some $a \in \Sigma$. In this case, by the induction hypothesis, $(u_{r-1},v_{r-1})$ has a preceding pair $(u',v')$ such that $\lambda(u') \neq \lambda(v')$, and, since $u_{r} \in \delta_{a}(u_{r-1})$ and $v_{r'} \in \delta_{a}(v_{r'-1})$, $(u',v')$ is also a preceding pair for $(u,v)$.
\end{proof}

\begin{lemma}\label{appb:max_order:lm2}
    Let $\mathcal{A}=(Q,\delta,\Sigma,s)$ be an NFA, let $\leq_{R}$ be the maximum co-lex relation of $\mathcal{A}$, and let $\sim_{R}$ be the equivalence relation induced by $\leq_{R}$.
    Furthermore, let $u, v \in Q$ be such that $u \sim_{R} v$. Then, for each $u'$ such that $u \in \delta_{a}(u')$ for some $a\in \Sigma$, we have that $\phi(u')=\phi(u)-1$.
\end{lemma}
\begin{proof}
    Suppose for the purpose of contradiction that there exists $u'\in Q$ such that $u \in \delta_{a}(u')$ and $\phi(u') \neq \phi(u)-1$.
    Due to Axiom 1 of Definition~\ref{2:def:col_rel}, it holds that $\lambda(u) = \lambda(v) = \{a\}$, for some $a \in \Sigma$. Then, there exists $v' \in Q$ such that $v \in \delta_{a}(v')$ and $\phi(v')=\phi(v)-1$. Since $u \sim_{R} v$, due to Axiom 2 of Definition \ref{2:def:col_rel}, also $v' \sim_{R} u'$ must hold. However, due to Lemma \ref{appb:max_order:lm1}, we know that $u \sim_{R} v$ implies $\phi(u) = \phi(v)$. Now $\phi(u') \neq \phi(u)-1$ and $\phi(v') = \phi(v)-1$ imply $\phi(u') \neq \phi(v')$. However, using Lemma \ref{appb:max_order:lm1}, $\phi(u') \neq \phi(v')$ implies that $u' \sim_{R} v'$ does not hold, yielding the desired contradiction.
\end{proof}

\begin{proposition}
    Let $\mathcal{A}=(Q,\delta,\Sigma,s)$ be an NFA, and let $\leq_{R}$ be the maximum co-lex relation of $\mathcal{A}$. Then, if $\mathcal{A}$ admits a maximum co-lex order $\leq$, we have that $\leq$ is equal to $\leq_{R}$.
\end{proposition}

\begin{proof}
    Since every co-lex order is also a co-lex relation, and since we know that $\leq_{R}$ satisfies reflexivity and transitivity~\cite[Lemma 5]{cotumaccio2022graphs}, if $\leq_{R}$ satisfies also antisymmetry, then $\leq\,=\, \leq_{R}$ must hold. 
    Therefore, let us suppose for contradiction that $\leq_{R}$ is not antisymmetric. It means that there exist $u,v \in Q$, with $u \neq v$, such that $u \sim_{R} v$, where $\sim_{R}$ is the equivalence relation induced by $\leq_{R}$. By Lemma \ref{appb:max_order:lm1}, we know that if $u \sim_{R} v$ holds, and $u \neq v$, then $\phi(u) = \phi(v)$. Now, let us consider a pair $(\bar{u},\bar{v})$, with $\bar{u},\bar{v} \in Q$ and $\bar{u} \neq \bar{v}$, that satisfies the following properties: (i)~$\bar{u} \sim_{R} \bar{v}$ holds. (ii)~There is no pair $(u^{*},v^{*})$, with $u^{*},v^{*} \in Q$ and $u^{*} \neq v^{*}$, such that $u^{*} \sim_{R} v^{*}$ and $\phi(u^{*}) = \phi(v^{*}) < \phi(\bar{u}) = \phi(\bar{v})$. Due to Axiom 1 of Definition \ref{2:def:col_rel}, we know that $\lambda(\bar{u})=\lambda(\bar{v})=\{a\}$, for some $a \in \Sigma$. Therefore there exist $u',v' \in Q$, such that $\bar{u} \in \delta_{a}(u')$ and $\bar{v} \in \delta_{a}(v')$. Due to Lemma \ref{appb:max_order:lm2}, we know that $\phi(u') = \phi(v') = \phi(\bar{u}) - 1$. Due to Axiom 2 of Definition \ref{2:def:col_rel}, we know that $u' \sim_{R} v'$. We can observe that if $u' \neq v'$, then the pair $(u',v')$ would satisfy the requirements $u' \neq v'$, $u' \sim_{R} v'$, and $\phi(u') = \phi(v') < \phi(\bar{u}) = \phi(\bar{v})$, it follows that $u' = v'$ must hold. Now let us define the two relations $\leq'\; \coloneqq \{(u,u) : u \in Q\}\cup \{(\bar{u},\bar{v})\}$ and $\leq''\; \coloneqq \{(u,u) : u \in Q\}\cup \{(\bar{v},\bar{u})\}$. It follows that both $\leq'$ and $\leq''$ are co-lex orders of $\mathcal{A}$, however $\leq$ cannot be a superset of both relations as it is antisymmetric, a contradiction.\qedhere
\end{proof}

}

\end{document}